\newtheorem{theorem}{Theorem}[section]
\newtheorem{definition}[theorem]{Definition}
\newtheorem{lemma}[theorem]{Lemma}
\newtheorem{corollary}[theorem]{Corollary}
\newtheorem{proposition}[theorem]{Proposition}
\newtheorem{observation}[theorem]{Observation} 
\newcommand{\ar}{\ensuremath{\text{\textsc{ar}}}\xspace}
\newcommand{\eps}{\varepsilon}
\newcommand{\Oh}{\mathcal{O}}
\newcommand{\zz}{\mathbb{Z}}
\newcommand{\nn}{\mathbb{N}}
\newcommand{\cs}{\textsc{CS}}
\newcommand{\glit}{\Gamma^{LIT}}
\newcommand{\gneg}{\Gamma^{NEG}}
\newcommand{\gtf}{\Gamma^{T,F}}
\newcommand{\NAE}{\text{NAE}}
\newcommand{\OR}{\text{OR}}
\newcommand{\AND}{\text{AND}}
\newcommand{\XOR}{\text{XOR}}
\newcommand{\EX}{\text{EX}}
\newcommand{\maxcsp}{\textsc{Max CSP}\xspace}
\newcommand{\maxcspg}[1]{\textsc{Max CSP($#1$)}\xspace}
\newcommand{\cspnn}[1]{\textsc{Max CSP}$(\Gamma_{#1}, \nn)$}
\newcommand{\cspzz}[1]{\textsc{Max CSP}$(\Gamma_{#1}, \zz)$}
\newcommand{\cspnnc}[1]{\textsc{Max CSP}$(\Gamma_{#1}, \nn, c)$}
\newcommand{\cspzzc}[1]{\textsc{Max CSP}$(\Gamma_{#1}, \zz, c)$}
\newcommand{\cspnnpar}[1]{\textsc{Max CSP}$(#1, \nn, c)$}
\newcommand{\containment}[0]{\ensuremath{\text{NP} \subseteq \text{co-NP} / \text{poly}}\xspace}
\newcommand{\ncontainment}[0]{\ensuremath{\text{NP} \not \subseteq \text{co-NP} / \text{poly}}\xspace}
  \newcommand{\mic}[1]{\textcolor{red}{#1}}
  \newcommand{\bmp}[1]{{\color{blue}{#1}}}
  \newcommand{\micr}[1]{\marginpar{\small \textcolor{red}{$\bullet$ #1}}}
  \newcommand{\bmpr}[1]{\marginpar{\small \textcolor{blue}{$\bullet$ #1}}}
  \newcommand{\mic}[1]{#1}
  \newcommand{\bmp}[1]{#1}
  \newcommand{\micr}[1]{}
  \newcommand{\bmpr}[1]{}
\title{Optimal polynomial-time compression for Boolean Max CSP\footnote{This project has received funding from the~European Research Council (ERC) under the~European Union's Horizon 2020 research and innovation programme (grant agreement No 803421, ReduceSearch).}}
\author{Bart M.P. Jansen\footnote{Address: \texttt{b.m.p.jansen@tue.nl}}  \\ Eindhoven University of~Technology
\and Micha{\l} W{\l}odarczyk\footnote{Address: \texttt{m.wlodarczyk@tue.nl}} \\ Eindhoven University of~Technology}
\date{}
\begin{document}
\maketitle

\begin{abstract}
In the~Boolean maximum constraint satisfaction problem -- \textsc{Max CSP$(\Gamma)$} -- one is given a~collection of~weighted applications of~constraints from a~finite constraint language~$\Gamma$, over a~common \bmp{set} of~variables, and the~goal is to assign Boolean values to the~variables so that the~total weight of~satisfied constraints is maximized.
There exists an elegant dichotomy theorem providing a~criterion on $\Gamma$ for the~problem to be polynomial-time solvable and stating that otherwise it becomes NP-hard.
We study the~NP-hard \bmp{cases} through the~lens of~kernelization and provide a~complete characterization of~\textsc{Max CSP$(\Gamma)$} with respect to the~optimal compression size.
Namely, we~prove that \textsc{Max CSP$(\Gamma)$} parameterized by the~number of~variables~$n$ is either polynomial-time solvable, or there exists an integer $d \ge 2$ depending on $\Gamma$, such that
\begin{enumerate}
    \item \bmp{An instance of} \textsc{Max CSP$(\Gamma)$} can be compressed into an equivalent instance with $\Oh(n^d\log n)$ bits in polynomial time,
    \item \textsc{Max CSP$(\Gamma)$} does not admit such a~compression to $\Oh(n^{d-\eps})$ bits unless $\text{NP} \subseteq \text{co-NP} / \text{poly}$.
\end{enumerate}

Our reductions are based on interpreting constraints as multilinear polynomials combined with the~framework of~constraint implementations.
As another application of~our reductions, 
we~reveal tight connections between optimal running times for solving \textsc{Max CSP$(\Gamma)$}.
More precisely,
we~show that obtaining a~running time of~the~form $\Oh(2^{(1-\eps)n})$ for particular classes of~\bmp{\textsc{Max CSP}s} is as hard as breaching this barrier for \textsc{Max $d$-SAT} for some $d$.\micr{I decided to drop the~notation $\Oh^*$ because the~polynomial term can be dominated by adjusting $\eps$.}
\end{abstract}

\includegraphics[height=1cm]{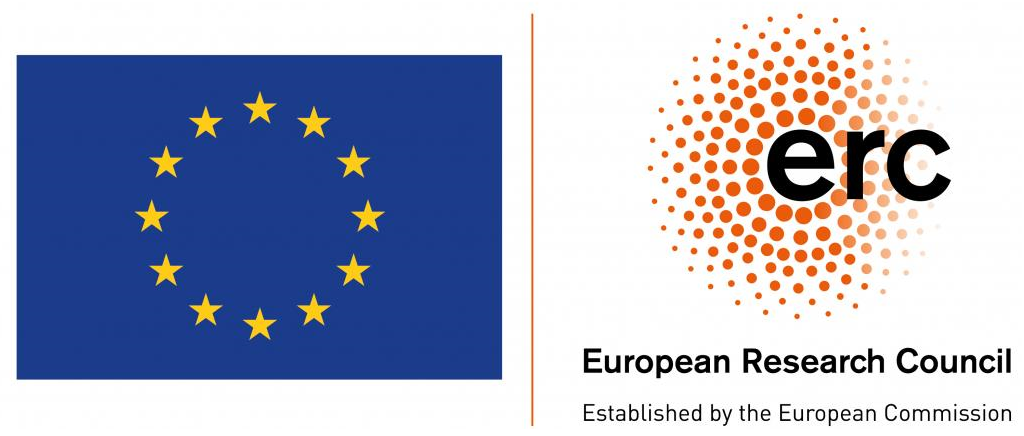}

\section{Introduction}
\paragraph*{Background and motivation}
The framework of~constraint satisfaction problems (CSPs) allows the~computational complexity of~a~large class of~problems to be studied through a~common lens~\cite{CreignouKS01}. A~typical instance of~such a~problem asks whether it is possible to assign each of~the~variables~$x_1, \ldots, x_n$ a~value from a~finite domain~$D$, such that a~given list of~constraint applications is satisfied. A~constraint is applied to a~fixed number of~variables, and indicates which combinations of~values are legal. In the~\maxcsp problem, the~goal is to maximize the~number of~satisfied constraints. See Section~\ref{sec:preliminaries} for formal definitions.

The investigation of~CSPs has led to deep theorems characterizing the~complexity of~a~CSP based on the~type of~constraints allowed in the~instance~\cite{BulatovGKK09,KrokhinZ17}. For example, the~long-awaited CSP dichotomy theorem~\cite{Bulatov17,Zhuk17} provides a~criterion separating the~NP-complete from the~polynomial-time solvable CSPs; the~work of~Khanna, Sudan, Trevisan, and Williamson characterizes how well the~maximization version of~a~Boolean CSP can be approximated~\cite{KhannaSTW00} (see~\cite{DeinekoJKK08, JonssonK07} for larger domains; see~\cite{DalmauKM18,MakarychevM17} for optimal approximation factors); and Cai and Chen~\cite{CaiC17} present a~dichotomy that separates CSPs for which the~number of~complex-weighted solutions can be counted in polynomial time, from those where the~problem is \#P-hard.


In this work we~analyze the~complexity of~constraint satisfaction in an algorithmic regime that is currently far from understood: polynomial-time compression and kernelization~\cite{FominLSZ19}. Here, the~goal is to analyze how much (in terms of~the~number of~variables~$n$) an instance can be compressed by a~polynomial-time algorithm without changing the~answer, and to understand how the~compressibility depends on the~type of~available constraints. A~\emph{compression} is a~polynomial-time algorithm that reduces instances of~one problem to equivalent, small instances of~a~potentially different problem; a~\emph{kernelization} compresses to an instance of~the~same problem \bmp{(see Section~\ref{sec:prelims:fpt}).} A~kernelization of~small size allows an instance to be stored, manipulated, and solved more efficiently. It is therefore of~interest to find the~smallest possible kernelizations. Since every kernelization yields a~compression, one can prove lower bounds on the~size of~kernelizations by establishing lower bounds on compressions.

In recent years, there have been a~number of~advances in the~understanding of~compressibility of~CSPs~\cite{ChenJP20,DellM14,JansenP19,LagerkvistW17}. A~foundational result by Dell and van Melkebeek~\cite{DellM14} states that for~$d \geq 3$, \textsc{CNF-SAT} with clauses of~size at most~$d$ (\textsc{$d$-CNF-SAT}) parameterized by the~number of~variables~$n$ admits no \bmp{(polynomial-time)} compression of~size~$\Oh(n^{d-\varepsilon})$ for any~$\varepsilon > 0$, unless \containment. As an instance of~\textsc{$d$-CNF-SAT} can trivially be compressed to~$\Oh(n^d)$ bits via~a~bitstring that encodes for each of~the~$\Oh(n^d)$ possible clauses whether or not it is present in the~instance, the~\textsc{$d$-CNF-SAT} problem does not admit any non-trivial compression. The situation is different for the~related problem \textsc{$d$-Not-All-Equal SAT} (\textsc{$d$-NAE-SAT}), which is the~variant where a~clause is satisfied when its literals do not all evaluate to the~same value. Jansen and Pieterse showed~\cite{JansenP17,JansenP19} that for~$d \geq 3$, the~\textsc{$d$-NAE-SAT} problem has a~compression of~size~$\Oh(n^{d-1} \log n)$, but not of~size~$\Oh(n^{d-1-\varepsilon})$ unless \containment. This example shows that the~type of~constraints affects the~compressibility of~a~CSP.

The notion of~a~\emph{constraint language} is used to rigorously analyze how the~complexity of~a~CSP depends on the~type of~constraints. In this work, we~will only consider CSPs over the~Boolean domain: we~work exclusively with Boolean constraints and constraint languages. \mic{a~constraint is therefore a~function of~the~form~$f \colon \{0,1\}^k \rightarrow \{0,1\}$, where~$k \geq 1$ is the~\emph{arity} of~the~constraint, also denoted as~$\ar(f)$. A~constraint language~$\Gamma$ is a~\emph{finite} set of~constraints. The input of~the~corresponding decision problem, denoted CSP($\Gamma)$\bmp{,} consists of~a~set of~constraint applications of~the~form $f(x_{j_1}, \dots, x_{j_{\ar(f)}}) = 1$ over $n$ common variables, where $f$ is some constraint from $\Gamma$.
The question is whether there is an assignment~$\{x_1,\ldots,x_n\} \to \{0,1\}$
satisfying all the~constraint applications.}

In this terminology, Chen, Jansen, and Pieterse~\cite{ChenJP20} characterized for all (Boolean) constraint languages~$\Gamma$ consisting of~constraints of~arity at most three, what the~optimal compression size is for CSP($\Gamma$). Lagerkvist and Wahlstr\"om~\cite{LagerkvistW17} gave universal-algebraic conditions on~$\Gamma$ which ensure that CSP($\Gamma$) has a~compression of~size~$\Oh(n \log n)$, and a~characterization is known of~the~constraint languages~$\Gamma_{sym}$ consisting entirely of~\emph{symmetric} functions for which CSP($\Gamma_{sym})$ has a~compression of~near-linear size~\cite[\S 5]{ChenJP20}. Hence there is some understanding of~the~optimal compressibility of~CSP($\Gamma$). 

However, when we~move from the~question of~whether \emph{all} constraints can be satisfied to the~task of~maximizing the~number of~satisfied constraints (\maxcsp),  the~situation is much less understood. To the~best of~our knowledge, no non-trivial compressions are known for any \maxcspg{\Gamma}, and no compression lower bounds are known for \maxcspg{\Gamma} other than those already implied from CSP($\Gamma$). In this paper, we~therefore analyze the~compressibility of~\mbox{\maxcspg{\Gamma}}.

Before presenting our results, we~briefly summarize the~main algorithmic approach for compressing CSP($\Gamma$) and illustrate why it fails completely for \maxcsp. Consider for example 3-NAE-SAT. The number of~constraint applications in an $n$-variable instance of~this problem can be reduced to~$\Oh(n^2)$ without changing the~solution space, which allows it to be encoded in~$\Oh(n^2 \log n)$ bits. The \emph{sparsification} to~$\Oh(n^2)$ constraint applications is achieved by a~linear-algebraic approach. Note that a~not-all-equal constraint on variables~$(x,y,z) \in \{0,1\}^3$ is satisfied if and only if~$x + y + z - xy - xz - yz - 1 = 0$. Observe that if~$p_1(x_1, \ldots, x_n) = 0, \ldots, p_m(x_1, \ldots, x_n) = 0$ are polynomial equalities which are satisfied by an assignment to~$x_1, \ldots, x_n$, then also~$\sum _{i=1}^m \alpha_i \cdot p_i(x_1, \ldots, x_n) = 0$ holds for any linear combination as determined by~$\alpha_1, \ldots, \alpha_m$. To sparsify a~3-NAE-SAT instance with this insight, proceed as follows. Transform each constraint~$c_i$ into an equality~$p_i(x_1, \ldots, x_n) = 0$ for a~degree-2 polynomial~$p_i$, substituting~$1-v$ for negated variables~$\neg v$ in the~constraint. This yields a~system of~equations of~degree-2 polynomials in~$n$ variables, which \bmp{have}~$\Oh(n^2)$ distinct monomials. The rank of~a~corresponding vector space is therefore~$\Oh(n^2)$, which yields a~basis of~$\Oh(n^2)$ equalities such that all others can be expressed as their linear combinations. All constraints not corresponding to an element of~this basis can be safely omitted from an instance of~3-NAE-SAT, since they will be automatically satisfied by any assignment that satisfies \emph{all} basis constraints. This yields the~claimed sparsification of~$\Oh(n^2)$ constraints. Note, however, that this approach fails completely for the~variant \textsc{Max 3-NAE-SAT}: if an assignment \emph{does not} satisfy all constraints of~the~basis, this does not give any satisfaction guarantees on the~linearly-dependent constraints. Hence the~sparsification approach for CSP($\Gamma$) is not applicable for \maxcspg{\Gamma}, and a~priori it is not clear whether any \maxcspg{\Gamma} problem admits a~non-trivial polynomial-time compression.

\paragraph*{Our results}
We provide a~new route to compression for \maxcspg{\Gamma}, and prove that the~resulting compressions are \emph{essentially optimal} for all constraint languages~$\Gamma$, assuming \ncontainment. Our results characterize the~optimal compressibility of~all Boolean {\maxcsp}s in terms of~degrees of~characteristic polynomials, and uncover a~wide range of~\maxcspg{\Gamma} problems that admit a~non-trivial compression. For a~Boolean function~$f \colon \{0,1\}^k \to \{0,1\}$, its \emph{characteristic polynomial} is the~\emph{unique} $k$-variate multilinear polynomial~$P_f(x)$ over~$\mathbb{R}$ that agrees with~$f$ on all~$x \in \{0,1\}^k$. The fact that this representation is unique is well-known (cf.~\cite{NisanS94}). For a~constraint language~$\Gamma$, define~$\deg(\Gamma) = \max _{f \in \Gamma} \deg(P_{f})$. We prove that~$\deg(\Gamma)$ characterizes the~compressibility of~\maxcspg{\Gamma}. 

To state our results precisely, we~have to address a~feature of~the~problem that is particular to the~maximization variant: repetitions of~constraint applications. While such repetitions are irrelevant in the~CSP setting when all constraint applications have to be satisfied, they become relevant when maximizing the~number of~satisfied constraint applications. The standard approach in the~\maxcsp literature is therefore to give each constraint application a~positive integer weight value~\cite{CreignouKS01,KhannaSTW00}.
The decision problem \maxcspg{\Gamma} then takes as input a~system of~$\Gamma$-constraint applications with weights from~$\mathbb{N}$, and a~threshold value~$t$, and asks whether there is an assignment such that the~weight of~the~satisfied constraint applications is at least~$t$.

Let~$\Gamma$ be a~(finite, Boolean) constraint language. Our main positive result is the~following.

\begin{theorem} \label{thm:upperbound:informal}
\maxcspg{\Gamma} parameterized by the~number of~variables~$n$, with positive integer weights bounded by~$n^{\Oh(1)}$, admits a~compression of~size~$\Oh\big(n^{\deg(\Gamma)} \log n\big)$.
\end{theorem}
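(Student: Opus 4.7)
The plan is to encode the entire weighted instance as a single real-valued multilinear polynomial of degree at most~$d := \deg(\Gamma)$ in the variables~$x_1, \ldots, x_n$, and output only its coefficient vector. For each~$f \in \Gamma$ fix its characteristic polynomial~$P_f$, so that $P_f(\alpha) = f(\alpha)$ for every~$\alpha \in \{0,1\}^{\ar(f)}$. Writing the~$i$-th input application as~$(f_i, \vec{y}_i, w_i)$, where~$\vec{y}_i$ is an~$\ar(f_i)$-tuple of input variables, I define
\[
Q(x_1,\ldots,x_n) \;:=\; \sum_{i} w_i \cdot P_{f_i}(\vec{y}_i),
\]
which is multilinear of degree at most~$d$. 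For every Boolean assignment~$\alpha \in \{0,1\}^n$, $Q(\alpha)$ equals the total weight of satisfied constraint applications, so the original instance is equivalent to asking whether there exists~$\alpha \in \{0,1\}^n$ with $Q(\alpha) \ge t$. I~package this as an instance of the (obviously NP) target problem ``given a sparse multilinear polynomial of degree at most~$d$ and a threshold, is its Boolean maximum at least~$t$?''.

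The compressed output is the list of nonzero monomial coefficients of~$Q$ together with~$t$. A multilinear polynomial of degree at most~$d$ in~$n$ variables has $\sum_{i=0}^{d}\binom{n}{i} = \Oh(n^d)$ monomials, which bounds the number of coefficients by~$\Oh(n^d)$. To obtain the tighter size bound~$\Oh(n^d \log n)$, I first merge any two applications sharing the same constraint and the same ordered variable tuple by summing their weights; this leaves at most~$|\Gamma|\cdot n^{k}$ distinct applications, where~$k$ is the maximum arity in~$\Gamma$. Under the standing assumption that individual weights are~$n^{\Oh(1)}$, and using that the number of merged applications is polynomial in~$n$, each aggregated weight remains~$n^{\Oh(1)}$. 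Clearing the bounded common denominator of~$\{P_f\}_{f \in \Gamma}$ (a constant depending only on~$\Gamma$), every coefficient~$c_T$ of~$Q$ is an integer of magnitude~$n^{\Oh(1)}$ occupying~$\Oh(\log n)$ bits, and the threshold~$t \le \sum_i w_i$ fits the same budget; multiplying gives the claimed size~$\Oh(n^d \log n)$.

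The whole reduction is polynomial-time arithmetic on integers of polynomial bit-length, so it is a polynomial-time compression. The main obstacle I expect is not algebraic but combinatorial bookkeeping: one has to guarantee that after merging duplicate applications the aggregated weights stay polynomial in~$n$, so that the coefficient bitlength is truly~$\Oh(\log n)$, and one has to verify that the compression target is a bona-fide NP problem so that the scheme is formally a~compression in the Dell--van Melkebeek sense rather than just a polynomial-size representation of~$Q$.
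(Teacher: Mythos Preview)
Your proposal is correct and is essentially the paper's approach: write the objective as the degree-$\deg(\Gamma)$ multilinear polynomial $\sum_i w_i\,P_{f_i}$ and output its $\Oh(n^{\deg(\Gamma)})$ integer coefficients, each of $\Oh(\log n)$ bits; the paper states exactly this in the paragraph preceding Theorem~\ref{thm:csp-ub} and then additionally converts the polynomial back into a \maxcspg{\Gamma} instance, which you do not need for the bare compression claim. Two small corrections: the coefficients of each $P_f$ are already integers (so no denominator clearing is needed), and the reason aggregated weights stay $n^{\Oh(1)}$ is that the \emph{total} weight $\|\Phi\|$ is $n^{\Oh(1)}$ under the paper's formalization, not that the number of merged applications is polynomial.
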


In fact, we~are even able to reduce any instance of~\maxcspg{\Gamma} to an equivalent instance of~the~\emph{same} problem, having~$\Oh\big(n^{\deg(\Gamma)}\big)$ weighted constraint applications. We prove matching lower bounds whenever \maxcspg{\Gamma} is NP-complete. It is known~\cite{Creignou95,CreignouKS01,KhannaSTW00} that for inputs with positive integer weights, \maxcspg{\Gamma} is polynomial-time solvable if~$\Gamma$ is 0-valid, 1-valid, or~2-monotone (see Section~\ref{sec:constraint:types}), and NP-complete otherwise.

\begin{theorem} \label{thm:lowerbound:informal}
If~$\Gamma$ is not 0-valid, 1-valid, or 2-monotone, then assuming \ncontainment, \maxcspg{\Gamma} parameterized by the~number of~variables~$n$, with positive integer weights bounded by~$n^{\Oh(1)}$, does not admit a~compression of~size~$\Oh\big(n^{\deg(\Gamma)-\eps}\big)$ for any~$\eps > 0$.
\end{theorem}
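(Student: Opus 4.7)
The plan is to establish the lower bound via a linear-parameter transformation from \textsc{$d$-CNF-SAT} with $d = \deg(\Gamma)$ to \maxcspg{\Gamma}, invoking the theorem of Dell and van Melkebeek that \textsc{$d$-CNF-SAT} admits no compression of size $\Oh(n^{d-\eps})$ unless \containment. Mapping an $n$-variable \textsc{$d$-CNF-SAT} instance to an $\Oh(n)$-variable \maxcspg{\Gamma} instance transfers the lower bound. Since \textsc{$d$-CNF-SAT} is a decision problem about whether \emph{all} clauses are simultaneously satisfiable, we embed it inside \maxcspg{\Gamma} by setting the threshold $t$ equal to the total weight of all constraint applications we create; then exceeding $t$ forces every application to be satisfied, matching the CNF semantics.

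First, pick a constraint $f \in \Gamma$ achieving $\deg(P_f) = d$, and fix a degree-$d$ monomial of $P_f$ supported on some $S \subseteq \{x_1,\dots,x_{\ar(f)}\}$ with $|S|=d$. Partially restricting the variables outside $S$ to appropriate constants in $\{0,1\}$ yields a function $g:\{0,1\}^d\to\{0,1\}$ whose characteristic polynomial still contains the monomial $\prod_{i\in S} x_i$, and therefore still has degree exactly $d$. Any such degree-$d$ multilinear polynomial on $d$ variables has a nonzero top coefficient, which means $g$ is, up to input polarities and a possible constant correction, equivalent to a $d$-ary AND/OR; hence (combining at most a constant number of copies of $g$ with the constant and literal-negation gadgets described next) we can realize an arbitrary $d$-clause using $\Oh(1)$ auxiliary variables per clause and polynomial weights.

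Producing the gadgets for constants and negations is where the three exclusion hypotheses enter. Non-$0$-validity provides a constraint in $\Gamma$ that fails on the all-zeros input, which, attached with a sufficiently large weight, becomes a unit gadget forcing a designated variable to $1$; symmetrically, non-$1$-validity yields a gadget forcing $0$. These constants can then be plugged into the unused slots of $f$ in order to realize the restriction $g$ above. Finally, non-$2$-monotonicity is exactly the extra ingredient guaranteed by the Creignou--Khanna--Sudan--Williamson classification that allows both literal polarities to be implemented by existentially quantified $\Gamma$-applications; intuitively, a constraint language that is neither $0$-valid, $1$-valid, nor $2$-monotone falls outside the tractable Post classes and therefore has enough expressive power, via constraint implementations, to simulate negation.

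The main obstacle I expect is the uniform case analysis gluing these pieces together: translating the algebraic fact ``$P_f$ has a degree-$d$ monomial'' and the three negative closure properties into concrete gadgets realizing constants, negations, and a clean $d$-clause, with all weights polynomial in the input size and only $\Oh(1)$ fresh variables per clause. Once that gadget library is in place, applying it clause-by-clause to a \textsc{$d$-CNF-SAT} instance on $n$ variables produces a \maxcspg{\Gamma} instance on $\Oh(n)$ variables with polynomially bounded weights, completing the linear-parameter transformation and yielding the $\Oh(n^{\deg(\Gamma)-\eps})$ lower bound under \ncontainment.
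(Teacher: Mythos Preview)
There is a genuine gap in the parameter accounting. You propose to realize each $d$-clause via an implementation using ``$\Oh(1)$ auxiliary variables per clause'' and then conclude that the resulting \maxcspg{\Gamma} instance has $\Oh(n)$ variables. But a \textsc{$d$-CNF-SAT} instance on $n$ variables may have $\Theta(n^d)$ clauses, so $\Oh(1)$ fresh auxiliaries per clause yields $\Theta(n^d)$ variables in total, destroying the linear-parameter transformation and hence the lower bound. The paper avoids this precisely by \emph{not} using implementations clause-by-clause: it expresses the characteristic polynomial of $\OR_d$ as a $\mathbb{Q}$-linear combination of characteristic polynomials of constraints in $\Gamma^{T,F}$ (Proposition~\ref{pro:identity}), so each clause is replaced by a weighted sum of $\Gamma^{T,F}$-applications on the \emph{same} variables, with no per-clause auxiliaries at all. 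Auxiliary variables are introduced only once globally: $\Oh(1)$ variables to simulate the constants $0,1$ (Lemma~\ref{lem:implement-tf}) and one negation-copy $\hat{x_i}$ per original variable to simulate literals (Lemma~\ref{lem:implement-lit}), giving $\Oh(n)$ variables overall.

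Two smaller issues compound this. First, your intermediate claim that a degree-$d$ Boolean function $g$ on $d$ inputs is ``up to input polarities and a constant correction equivalent to a $d$-ary AND/OR'' is false: $\XOR_2(x_1,x_2)=x_1+x_2-2x_1x_2$ has degree $2$ on two inputs but is not an AND/OR under any polarity flip. This is exactly why the paper works with arbitrary $\mathbb{Q}$-linear combinations rather than trying to isolate a single OR-like constraint. Second, you reduce from \textsc{$d$-CNF-SAT} satisfiability, which is polynomial-time solvable for $d=2$; the paper handles $d=2$ separately via a reduction from \textsc{Vertex Cover} to \textsc{Max 2-SAT} (Lemma~\ref{lem:2sat}), and you would need an analogous argument.
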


Our results uncover an interesting contrast in compressibility between decision CSPs and maximization CSPs. While both involve the~analysis of~the~degrees of~polynomials, the~type of~polynomials which is used differs, leading to differences in compressibility. The linear-algebraic approach to sparsify CSP($\Gamma$) yields a~compression of~size~$\Oh(n^d \log n)$ when for each constraint~$f \in \Gamma$, for each assignment~$u \in \{0,1\}^{\ar(f)}$ for which~$f(u) = 0$, there exists a~ring~$R$ and a~degree-$d$ polynomial over~$R$ for which~$P(u) \neq 0$ and~$P(x) = 0$ for all~$x \in \{0,1\}^{\ar(f)}$ with~$f(x) = 1$. Different rings can be used for different constraints, and all that matters is that the~polynomial for~$u$ is nonzero on~$u$ but zero on all satisfying assignments. In contrast, for \maxcsp we~can only use polynomials over~$\mathbb{R}$ (or, equivalently, $\mathbb{Q}$), and must ensure that the~value of~the~polynomial coincides with the~value of~the~constraint on all Boolean assignments. This means that a~higher-degree polynomial may be needed, which also translates into worse  compressibility. For example, while \textsc{$d$-NAE-SAT} has a~compression of~size~$\Oh(n^{d-1} \log n)$ for \emph{all}~$d \geq 3$, the~corresponding \textsc{Max $d$-NAE-SAT} problem with weights of~absolute value~$n^{\Oh(1)}$ has a~compression of~size~$\Oh(n^{d-1} \log n)$ for \emph{odd~$d \geq 3$}, but no compression of~size~$\Oh(n^{d-\eps})$ for \emph{even $d$}.
{Another example is \textsc{$d$-Exact SAT}, where we~require exactly one literal in each clause to be true.
Whereas \textsc{$d$-Exact SAT} admits a~compression of~size $\Oh(n \log n)$ for every fixed $d$~\cite{ChenJP20}, we~show that
\textsc{Max $d$-Exact SAT} cannot be compressed to $\Oh(n^{d-\eps})$~bits.}

\paragraph*{Techniques}
On a~high level, our results are obtained by combining two ingredients: (1) a~characterization of~the~complexity of~a~constraint language as~$\deg(\Gamma)$, via~the~degree of~the~characteristic polynomials, and (2) reductions between different problems \maxcspg{\Gamma} and \maxcspg{\Gamma'} by implementing constraints of~one language by combinations of~constraints from the~other. While both ingredients have been used in isolation~\cite{Creignou95,CreignouKS01,KhannaSTW00,LincolnWW18,Williams07}, their combination is novel and is the~key to understanding compressibility. To comprehend how characteristic polynomials help to compress an instance of~\maxcspg{\Gamma}, observe that since the~characteristic polynomial gives~$1$ when a~constraint is satisfied and~$0$ otherwise, the~total value of~satisfied constraint applications can be written as a~weighted sum of~applications of~characteristic polynomials. If~$\deg(\Gamma) = k$, then this weighted sum contains~$\Oh(n^k)$ distinct monomials. An instance can therefore be compressed by expanding this weighted sum, and storing the~coefficient of~each monomial. If all weights in the~input instance are bounded by~$n^{\Oh(1)}$, each coefficient will have value~$n^{\Oh(1)}$ and can therefore be encoded in~$\Oh(\log n)$ bits.

Our lower bounds are obtained by parameterized reductions between {\maxcsp}s in which the~number of~variables does not grow significantly. By a~careful analysis of~the~terms of~the~characteristic polynomial, we~effectively show that if~$\deg(\Gamma) = \deg(\Gamma')$, then constraint applications from~$\Gamma$ can effectively be simulated by combinations of~constraints from~$\Gamma'$. Here, we~use the~framework of~\emph{implementations} from an earlier work~\cite{KhannaSTW00}. Since the~characteristic polynomial of~$d$-CNF clauses has degree~$d$, this yields a~reduction from \textsc{$d$-CNF-SAT} to \maxcspg{\Gamma} for~$\deg(\Gamma) = d$ that preserves the~asymptotic size of~the~variable set, therefore transferring the~cited lower bound for \textsc{$d$-CNF-SAT}~\cite{DellM14} to \maxcspg{\Gamma}. The same reduction is also used to turn the~compression sketched above into a~kernelization, which outputs an instance of~the~original problem.

\paragraph*{Consequences for exponential-time algorithms}
The framework we~develop for parameterized reductions among {\maxcsp}s also has consequences for exponential-time algorithms, which we~believe to be of~independent interest. The \emph{\textsc{Max 3-SAT Hypothesis}}~\cite{LincolnWW18} states that \textsc{Max 3-CNF-SAT} with~$n$ variables cannot be solved in time~$\Oh(2^{(1-\eps)n})$ for any~$\eps > 0$ (cf.~\cite{AlmanW20, BringmannFK19}). Our reductions imply that this hypothesis is \emph{equivalent} to the~version where \textsc{Max 3-CNF-SAT} is replaced by \maxcspg{\Gamma} for any constraint language~$\Gamma$ with~$\deg(\Gamma) = 3$ in which negated literals can be expressed (formal details follow later). In particular, the~\textsc{Max 3-SAT} hypothesis is equivalent to the~statement that \textsc{Max E3-Lin} cannot be solved in time~$\Oh(2^{(1-\eps)n})$.
{What is more, for any $k \ge 2$,
our reductions uncover an equivalence class of~NP-hard problems whose optimal exponential-time running times coincide with the~one for \textsc{Max $k$-SAT}.}

\paragraph*{Related work}
Representations of~Boolean functions as polynomials have been studied frequently in the~literature~\cite{Barrington1994,Beigel93,LincolnWW18, MinP69,NisanS94,TardosB98, GathenR97} {revealing, e.g., a~relation between the~degree of~the~representation and the~decision tree complexity~\cite{NisanS94}.} 
Algorithms for CSPs via~their characteristic polynomials were first given by Williams~\cite{Williams07}. He used the~split-and-list technique to give accelerated exponential-time algorithms for \textsc{Max 2-SAT} and \maxcspg{\Gamma} for~$\deg(\Gamma) = 2$. In~recent work, Lincoln, Williams, and Vassilevska~Williams~\cite{LincolnWW18} give an exponential-time split-and-list reduction from \maxcspg{\Gamma} for~$\deg(\Gamma) = k$ to the~problem of~detecting an $\ell$-hyperclique in a~$k$-uniform hypergraph, for~$\ell > k$, in support of~the~$(k,\ell)$-hypothesis which states that detecting such a~hyperclique in an $n$-vertex input requires time~$n^{\ell-o(1)}$ on a~Word-RAM with~$\Oh(\log n)$-bit words. If~this hypothesis fails for some~$k$ and~$\ell$, their reduction implies that each \maxcspg{\Gamma} problem with~$\deg(\Gamma) = k$ can be solved in time~$\Oh(2^{(1-\eps)n})$ for some~$\eps > 0$. As their reductions run in exponential time, they are very different from ours.



\paragraph*{Organization}
We begin with Section~\ref{sec:preliminaries} containing all the~necessary definitions about CSPs and a~summary of~the~important properties that are already known, including the~implementation framework.
In Section~\ref{sec:polynomials} we~explain the~idea~of~representing constraints by polynomials and provide an algebraic background for our reductions.
It is followed by Section~\ref{sec:reductions}, where the~notion of~reduction between constraint systems is formalized, and the~main reductions are presented.
It serves as a~toolbox for proving the~main results for compression (Section~\ref{sec:compression}) and exponential-time algorithms (Section~\ref{sec:exponential}).

\section{Preliminaries} \label{sec:preliminaries}

For a~set~$S$ and integer~$d \geq 0$, let $\binom{S}{d}$ be the~collection of~all unordered size-$d$ subsets of~$S$. We use~$[n]$ as a~shorthand for~$\{1, \ldots, n\}$. 
A~$k$-ary constraint is a~function $f 
\colon \{0,1\}^k \rightarrow \{0,1\}$.
We refer to $k$ as the~arity of~$f$, denoted $\ar(f)$.
We always assume that the~domain is Boolean. \micr{I removed the~assumption $\ar(f) > 0$ as we~allow trivial constraints in $\gtf$.}
A~constraint $f$ is satisfied by an input $s \in \{0,1\}^k$ if $f(s) = 1$.
A~constraint language (sometimes called constraint family) $\Gamma$ is a~finite collection of~constraints $\{f_1, f_2, 
\dots, f_\ell\}$, potentially with different arities.
\mic{a~\emph{constraint application}, of~a~$k$-ary constraint $f$ to a~set of~$n$ Boolean variables, is a~triple $\langle f, (i_1, i_2, \dots i_k), w \rangle$, where
the~indices $i_j \in [n]$ select $k$ of~the~$n$ Boolean variables to whom the~constraint is applied, and $w$ is a~weight, described formally below.}
The variables can repeat in a~single application.

\begin{definition}
A~constraint system is a~pair $CS(\Gamma, \mathbb{W})$, where $\Gamma$ is constraint language and the~weight range $\mathbb{W}$ is either $\mathbb{Z}$ or $\mathbb{N}$.
An instance (or formula) of~$CS(\Gamma, \mathbb{W})$ is a~set of~constraint applications from $\Gamma$ over a~\bmp{common} set of~variables, each application having a~weight from~$\mathbb{W}$.
\end{definition}

We denote the~number of~constraint applications in formula~$\Phi$ by $|\Phi|$ and the~sum of~absolute values of~all weights in $\Phi$ by $||\Phi||$.
For an assignment vector $x$, the~integer $\Phi(x)$ is the~sum of~weights of~the~constraints applications satisfied by $x$.

In the~decision problem  \textsc{Max CSP$(\Gamma, \mathbb{W}, c)$} we~are given a~formula~$\Phi$ from $\cs(\Gamma, \mathbb{W})$ over $n$~variables such that $||\Phi|| \leq n^c$, together with integer $t$, and we~ask if there is an assignment $x$ such that $\Phi(x) \ge t$.
We indicate the~parameter $c$ in order to be accurate about the~specific decision problems, for which we~can show hardness results.
When it does not lead to a~confusion, e.g., when some property holds for all $c$, we~refer to this family of~problems shortly as \textsc{Max CSP$(\Gamma, \mathbb{W})$}.
Whenever we~use the~$\Oh$-notation, we~do it with respect to a~fixed problem, that is, we~treat $\Gamma$ and $c$ as constants.

The most commonly studied case is expressed by $\mathbb{W} = \nn$~\cite{DeinekoJKK08, KhannaSTW00, MakarychevM17}, where the~weights can be interpreted as repetitions of~constant applications.
It is important to make this distinction because it can be the~case that \cspnn{} is polynomially solvable whereas \cspzz{} is NP-hard~\cite{JonssonK07}.
Although our main reduction framework works for $\mathbb{W} = \zz$, we~are able to transfer the~compression lower bounds to the~case $\mathbb{W} = \nn$ as long as   \cspnn{} is NP-hard.

Another decision problem that is related to constraint systems is \textsc{Exact CSP$(\Gamma, \mathbb{W})$}, where we~ask whether there is an assignment for which the~\bmp{satisfied weights} sum up exactly to a~given integer~\cite{LincolnWW18, Williams07}.
Even though we~focus on the~maximization variant, we~formulate our reductions so that they could be employed for other problems over constraint systems or larger weight domains.

\subsection{Types of~constraints} \label{sec:constraint:types}

We start by formally defining the~most important constraints and constraints properties.
They~allow us to formulate the~dichotomy theorem for \maxcsp.
We use the Boolean notation for negation, i.e., $\neg x = 1 - x$ for $x \in \{0, 1\}$.

\begin{itemize}
\itemsep0em 
\item A~constraint is trivial if it is either always 1 or always 0 regardless of~the~arguments.

\item T and F are unary constraints given by $T(x) = x$ and $F(x) = \neg x$.\bmpr{Remark that we~interchangeably use Boolean and 0/1-representation of~true and false.}

\item $\OR_k$ and $\AND_k$ are $k$-ary constraints, such that $\OR_k(x_1, \ldots, x_k) = \bigvee_{i=1}^k x_i$ and $\AND_k(x_1, \ldots, x_k) = \bigwedge_{i=1}^k x_i$.
The Not-All-Equal constraint is defined as
$\NAE_k(x_1, \ldots, x_k) = OR_k(x_1, \ldots, x_k) \land OR_k(\neg x_1, \ldots, \neg x_k)$.

\item $\XOR_k$ is a~$k$-ary constraint defined as $\XOR_k(x_1, \ldots, x_k) = x_1 +  \ldots + x_k \bmod 2$. We abbreviate
$\XOR = \XOR_2$.

\item A~constraint $f$ is 0-valid (resp. 1-valid) if $f(0, 0, \dots, 0) = 1$ (resp.  $f(1, 1, \dots, 1) = 1$).

\item A~constraint $f$ is 2-monotone if $f(x_1, x_2, \dots, x_k) = (x_{i_1} \land x_{i_2} \land \dots \land x_{i_p}) \lor (\neg x_{j_1} \land \neg x_{j_2} \land \dots \land \neg x_{j_q})$, for some $p, q \ge 0$, $(p,q) \neq (0,0)$, i.e., $f$ is equivalent to a~DNF-formula~with at most two terms: one
containing only positive literals and the~other containing only negative literals.

\item A~constraint $f$ is C-closed (complementation-closed) if for every assignment $x \in \{0,1\}^{\ar(f)}$, $f(x) = f(\bar x)$, where $\bar x$ stands for the~bit-wise complement of~$x$. 

\item A~constraint $f$ is symmetric if for any two assignments $x_1, x_2 \in \{0,1\}^{\ar(f)}$ having the~same number of~ones, it holds that $f(x_1) = f(x_2)$.
\end{itemize}

A~constraint language $\Gamma$ is called 0-valid, 1-valid, 2-monotone, C-closed, or symmetric, if all {non-trivial} constraints in $\Gamma$ satisfy the~respective property. We call~$\Gamma$ \emph{non-trivial} if it contains at least one non-trivial \bmp{constraint}.
This regime is convenient for formulating the~fundamental dichotomy theorem for Boolean \textsc{Max CSP}.
For our purposes it is only important that APX-hardness entails NP-hardness.

\begin{theorem}[{\cite[Theorem 2.11]{KhannaSTW00}, cf.~\cite{Creignou95}}]\label{thm:dichotomy}
\cspnn{} is solvable in polynomial time if $\Gamma$ is either 0-valid, 1-valid, or 2-monotone.
Otherwise, the~problem is APX-hard.
\end{theorem}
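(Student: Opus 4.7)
The plan is to treat the tractability half and the APX-hardness half separately. For the tractable direction, if $\Gamma$ is 0-valid (resp.\ 1-valid), then the constant assignment $(0,\ldots,0)$ (resp.\ $(1,\ldots,1)$) satisfies every non-trivial constraint application by definition, so it is an optimal solution and can be output in linear time. For 2-monotone $\Gamma$, the plan is to encode an instance as a minimum $s$--$t$ cut. Interpret the side of the cut on which a variable lies as its Boolean value. For a constraint application of the form $(x_{i_1}\land\cdots\land x_{i_p})\lor(\neg x_{j_1}\land\cdots\land\neg x_{j_q})$ with weight $w$, introduce $\Oh(1)$ auxiliary vertices together with infinite-capacity edges that force these auxiliary vertices onto particular sides of the cut unless one of the two disjuncts is already satisfied, and one edge of capacity $w$ placed so as to be cut precisely when both disjuncts are falsified. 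Summing over all applications, the minimum cut equals the minimum total weight of unsatisfied constraint applications, so a single polynomial-time max-flow computation solves the maximization problem.

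For the APX-hardness direction, the plan is to use the implementation framework formalized later in the paper to show that whenever $\Gamma$ is not contained in any of the three tractable classes, its constraints can simulate a constraint language whose \maxcsp is already known to be APX-hard (Max Cut, Max 2-SAT, or Max 3-SAT). The three non-inclusion hypotheses each contribute a useful building block: not 0-valid (resp.\ not 1-valid) gives some $f \in \Gamma$ whose 1-set excludes the all-zero (resp.\ all-one) tuple; not 2-monotone gives a constraint whose 1-set cannot be written as the union of an up-set determined by a set of positive literals and a down-set determined by a set of negative literals. By substituting constants for individual arguments, identifying arguments, and taking weighted conjunctions of several copies of these constraints, one can derive a small collection of atomic constraints whose \maxcsp contains a known APX-hard problem as a special case. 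Since L-reductions compose and preserve APX-hardness, the conclusion transfers back to \cspnn{}.

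The main obstacle is the APX-hardness case analysis: producing the right gadgets systematically for every $\Gamma$ that falls outside the three tractable classes is not a one-line argument. The cleanest way to organize it is via the Post-lattice classification of Boolean clones: every non-trivial co-clone outside the 0-valid, 1-valid, and 2-monotone regions contains one of finitely many generators (encoding, e.g., an \NAE-type, \XOR-type, or affine-type obstruction), and for each such generator one exhibits a direct implementation of a fixed APX-hard constraint. Ensuring that the implementations carry their weights through as genuine L-reductions rather than mere polynomial-time reductions---so that the approximation hardness, and not only NP-hardness, is transferred---is the most delicate bookkeeping step.
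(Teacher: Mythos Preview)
The paper does not prove Theorem~\ref{thm:dichotomy}; it is quoted as a black-box result from Khanna, Sudan, Trevisan, and Williamson~\cite{KhannaSTW00}, so there is no in-paper proof to compare your proposal against directly. However, the paper does reproduce the key implementation lemmas from~\cite{KhannaSTW00} (Lemmas~\ref{lem:strict-transitive}--\ref{lem:not-c-closed-xor} and Lemma~\ref{lem:nphard-xor} here), and from those one can read off how the original hardness argument is organized and compare it to your sketch.

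Your tractability direction matches~\cite{KhannaSTW00}: the 0-valid and 1-valid cases are immediate from the constant assignments, and the 2-monotone case is indeed handled by a minimum $s$--$t$ cut construction of exactly the kind you describe.

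Your hardness direction is in the right spirit but organizes the case analysis differently from the source. In~\cite{KhannaSTW00} one does \emph{not} go through the Post lattice or a co-clone classification, and one does not need multiple target problems such as \textsc{Max 2-SAT} or \textsc{Max 3-SAT}. The argument is shorter: one shows that any $\Gamma$ which is not 0-valid, not 1-valid, and not 2-monotone strictly implements $\XOR$. This is precisely Lemma~\ref{lem:nphard-xor} in the present paper, obtained by a single case split on whether $\Gamma$ is C-closed (Lemma~\ref{lem:c-closed}) or not (Lemmas~\ref{lem:not-c-closed} and~\ref{lem:not-c-closed-xor}, combined via transitivity, Lemma~\ref{lem:strict-transitive}). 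A strict implementation of $\XOR$ then yields an L-reduction from \textsc{Max Cut}, which is APX-hard. So the pivot is C-closedness and the sole target is $\XOR$; your Post-lattice route could in principle be made to work, but it is heavier machinery than what the cited proof actually uses.
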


\subsection{Closures of~constraint languages}

\begin{definition}\label{def:expressible}
Let constraints $f, g$ be respectively $k$-ary and $d$-ary.
We say that $g$ is expressible by $f$ with constants if the~following identity holds
$$g(x_1, x_2, \dots, x_d) = f(\xi_1, \xi_2, \dots, \xi_k),$$
for a~vector $(\xi_1, \xi_2, \dots, \xi_k)$, where each $\xi_j$ is either a~variable $x_i$ for some $i \in [d]$ or one of~the~constants 0, 1.

We say that $g$ is expressible by $f$ with literals if such an identity holds for a~vector $(\xi_1, \xi_2, \dots, \xi_k)$, where each $\xi_j$ is a~literal: either a~variable $x_i$ or its negation $\neg x_i$ for some $i \in [d]$.
\end{definition}

For a~constraint language $\Gamma$ we~introduce its closures:

\begin{itemize}
\itemsep0em 
    \item the~language $\Gamma^{T,F}$ contains all functions expressible by $f \in \Gamma$ with constants,
    \item the~language $\Gamma^{LIT}$ contains all functions expressible by $f \in \Gamma$ with literals,  \item the~language $\Gamma^{NEG}$ is the~\emph{negation-wise closure} of~$\Gamma$, i.e., $\Gamma^{NEG} = \bigcup_{f \in \Gamma} \{f, \neg f\}$.
\end{itemize}
It is easy to see that the~closures satisfy $(\gtf)^{T,F} = \gtf$, $(\glit)^{LIT} = \glit$, $(\gneg)^{NEG} = \gneg$.
We will be particularly interested in those constraint languages \bmp{in which negated literals can be expressed,} as in, e.g., \textsc{$d$-CNF-SAT} or \textsc{$d$-NAE-SAT}.
These are the~languages that satisfy $\Gamma~= \glit$.
Below we~present examples on how to capture important CSPs using our definitions.
\begin{itemize}
\itemsep0em 
    \item \textsc{$d$-CNF-SAT} = \textsc{CSP$({\Gamma_{d\text{-SAT}}})$} for $\Gamma_{d\text{-SAT}} = \{\OR_d\}^{LIT}$,
    \item \textsc{$d$-NAE-SAT} = \textsc{CSP$(\{\NAE_d\}^{LIT})$},
    \item \textsc{Max E$d$-Lin} = \textsc{Max CSP$(\{\XOR_d\}^{NEG}, \nn)$},
    \item \textsc{Max Cut} = \textsc{Max CSP$(\{\XOR\}, \nn)$},
    \item \textsc{Max DiCut} = \textsc{Max CSP$(\{f\}, \nn)$} for $f(x_1, x_2) = x_1 \land \neg x_2$.
\end{itemize}

\subsection{Constraint implementations}

We describe a~technique that has been introduced in order to prove Theorem~\ref{thm:dichotomy}~\cite{KhannaSTW00}.
The idea~is to \emph{implement} a~constraint $f$ by a~collection of~other constraints, so that satisfying $f$ is equivalent to maximizing the~number of~satisfied constraints in that collection.
It allows to express formulas from \cspnn{1} by those from \cspnn{2}, as long constraints in $\Gamma_1$ can be implemented by those in $\Gamma_2$.

The caveat is that each implementation may introduce new auxiliary variables whereas for our purposes we~need reductions that increase the~number of~variables only by a~multiplicative constant.
Therefore the~reductions by Khanna~et al.~\cite{KhannaSTW00} do not transfer to the~compression paradigm and we~will use the~implementations in a~different way.
On the~other hand, our reductions do not preserve approximation factors.

\begin{definition}[{\cite[Definition 3.1]{KhannaSTW00}}]
A~collection of~unit-weighted constraint applications $C_1, C_2, \dots C_m$ over a~set
of~variables $x = \{x_1, x_2, \dots, x_p\}$ called primary variables and $y = \{y_1, y_2, \dots, y_q\}$ called auxiliary variables, is an $\alpha$-implementation of~a~constraint $f(x)$ for a~positive integer $\alpha$ if the~following conditions hold.

\begin{enumerate}
\itemsep0em 
\item For any assignment to $x$ and $y$, at most $\alpha$ constraint applications from $C_1, C_2, \dots C_m$ are satisfied.

\item For any $x$ such that $f(x) = 1$, there exists $y$ such that exactly $\alpha$ constraint applications are satisfied.

\item For any $x, y$ such that $f(x) = 0$, at most $\alpha~- 1$ constraint applications are satisfied.
\end{enumerate}
 An implementation is a~\emph{strict} $\alpha$-implementation if for every $x$ such that $f(x) = 0$, there exists $y$ such that exactly $\alpha~- 1$ constraint applications are satisfied.
\end{definition}

We say that a~constraint language $\Gamma$ (strictly) implements a~constraint $f$ if there exists a
(strict) $\alpha$-implementation of~$f$ using constraints of~$\Gamma$ for some constant $\alpha$.
We use $\Gamma~\Longrightarrow f$
to denote that $\Gamma$ implements $f$, and $\Gamma~\stackrel{s}{\Longrightarrow} f$ to denote that $\Gamma$ strictly implements $f$.
The above
notation is also extended to allow the~target to be a~family of~functions.

We omit the~machinery that was employed to construct implementations and we~will just exploit the~following results in a~black-box manner.
We are not going to rely on the~strictness property in further sections as we~need it only to ensure transitivity of~implementations for the~sake of~proving Lemma~\ref{lem:nphard-xor}, which was not stated explicitly in~\cite{KhannaSTW00}.

\begin{lemma}[{\cite[Lemma~3.5]{KhannaSTW00}}]\label{lem:strict-transitive}
If $\Gamma_1 \stackrel{s}{\Longrightarrow} \Gamma_2$ and $\Gamma_2 \stackrel{s}{\Longrightarrow} \Gamma_3$, then $\Gamma_1 \stackrel{s}{\Longrightarrow} \Gamma_3$.
\end{lemma}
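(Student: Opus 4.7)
The plan is to prove transitivity by a direct composition: substitute each constraint application from the outer (strict) implementation by a fresh copy of the inner (strict) implementation, using disjoint auxiliary variables, and then carefully bookkeeping the satisfied counts.

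In detail, fix $f \in \Gamma_3$. By assumption $\Gamma_2 \stackrel{s}{\Longrightarrow} f$, so there is a strict $\beta$-implementation $D_1,\dots,D_\ell$ of $f$ using constraint applications $g_j(x,y)$ with $g_j \in \Gamma_2$, where $x$ are the primary variables of $f$ and $y$ are auxiliary. For each $j$, since $\Gamma_1 \stackrel{s}{\Longrightarrow} g_j$, fix a strict $\alpha_j$-implementation of $g_j$ by $\Gamma_1$. Form a new collection $\mathcal{C}$ by taking, for each $j \in [\ell]$, one copy of the $\alpha_j$-implementation of $g_j$ where its primary variables are the variables used by $D_j$ and its auxiliary variables are renamed to be fresh (disjoint from $x$, $y$, and the auxiliaries of all other copies). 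I claim $\mathcal{C}$ is a strict $\alpha$-implementation of $f$ by $\Gamma_1$ with
\[
\alpha \;=\; \beta \;+\; \sum_{j=1}^{\ell}(\alpha_j - 1).
\]

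Verification splits into the three required conditions. For the upper bound, fix any assignment to $x$, $y$, and all fresh auxiliaries. In block $j$, the number of satisfied $\Gamma_1$-constraints is at most $\alpha_j$ if $g_j(x,y)=1$ and at most $\alpha_j - 1$ if $g_j(x,y)=0$, by the implementation property. Letting $k = \#\{j : g_j(x,y)=1\}$, the total is at most $\sum_j \alpha_j - (\ell - k) = \alpha - \beta + k$. Since $(D_1,\dots,D_\ell)$ is an implementation, $k \le \beta$, giving at most $\alpha$ overall. If moreover $f(x)=1$, choose $y$ attaining $k=\beta$; inside each block, the (strict) implementation of $g_j$ lets us pick fresh auxiliaries achieving exactly $\alpha_j$ if $g_j(x,y)=1$ and exactly $\alpha_j - 1$ if $g_j(x,y)=0$, for a total of exactly $\alpha$. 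If instead $f(x)=0$, the outer implementation forces $k \le \beta - 1$ for every $y$, so the total is at most $\alpha - 1$; for strictness, pick $y$ with $k = \beta - 1$ (using strictness of the $\Gamma_2$-implementation) and maximize each block separately to attain exactly $\alpha - 1$.

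The main technical point, and the reason strictness of both component implementations is essential, is in the last step: without strictness of the outer implementation one cannot guarantee some $y$ realizing exactly $\beta - 1$ satisfied outer constraints when $f(x)=0$, and without strictness of the inner implementations one cannot, for an unsatisfied $g_j(x,y)=0$, force the number of satisfied $\Gamma_1$-constraints in the $j$-th block to reach $\alpha_j - 1$ rather than drop further. These two facts together are what let the composed implementation hit exactly $\alpha - 1$ on negative inputs and exactly $\alpha$ on positive ones, closing the proof.
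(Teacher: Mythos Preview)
The paper does not supply its own proof of this lemma; it is quoted verbatim as \cite[Lemma~3.5]{KhannaSTW00} and used as a black box. Your composition argument is correct and is precisely the standard proof one finds in the cited source: replace each outer constraint application by a fresh copy of its inner strict implementation with disjoint auxiliaries, and verify that the resulting collection is a strict $\big(\beta+\sum_j(\alpha_j-1)\big)$-implementation by the case analysis you gave.
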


\begin{lemma}[{\cite[Lemma~4.5]{KhannaSTW00}}]\label{lem:c-closed}
Let $f$ be a~non-trivial constraint which is C-closed and is not 0-valid (or equivalently not 1-valid). Then $\{f\} \stackrel{s}{\Longrightarrow} \XOR$.
\end{lemma}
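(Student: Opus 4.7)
The plan is to exhibit a strict $1$-implementation of $\XOR$ that uses a single application of $f$ and introduces no auxiliary variables. First I would use the hypothesis that $f$ is non-trivial to pick some $s \in \{0,1\}^k$ with $f(s) = 1$. Because $f$ is C-closed and not $0$-valid we have $f(\vec 0) = 0$, and hence $f(\vec 1) = f(\overline{\vec 0}) = 0$, so $s$ is neither $\vec 0$ nor $\vec 1$. In particular, the coordinate sets $I_1 = \{i : s_i = 1\}$ and $I_0 = \{i : s_i = 0\}$ are both non-empty.

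Given primary variables $u, v$, the next step is to construct the single unit-weighted constraint application $f(\xi_1, \ldots, \xi_k)$ where $\xi_i = u$ if $s_i = 1$ and $\xi_i = v$ if $s_i = 0$. A four-case verification then shows that this is a strict $1$-implementation of $\XOR(u,v)$. Under $(u,v) = (1,0)$ the tuple $(\xi_1, \ldots, \xi_k)$ equals $s$, so the application evaluates to $f(s) = 1$; under $(u,v) = (0,1)$ it equals $\bar s$ and evaluates to $f(\bar s) = f(s) = 1$ by C-closedness. Under $(u,v) = (0,0)$ the tuple equals $\vec 0$, giving $f(\vec 0) = 0$, and under $(u,v) = (1,1)$ it equals $\vec 1$, giving $f(\vec 1) = f(\vec 0) = 0$. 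Thus one application is satisfied exactly when $\XOR(u,v) = 1$, and zero applications when $\XOR(u,v) = 0$, matching the definition of a strict $\alpha$-implementation with $\alpha = 1$.

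The main subtle point, and the one where both hypotheses are actually used, is ensuring that $s \notin \{\vec 0, \vec 1\}$, so that the constructed constraint application genuinely depends on both $u$ and $v$. Non-triviality alone only yields the existence of a satisfying input for $f$, but it is C-closedness together with the failure of $0$-validity that rules out the two constant tuples; otherwise the application would collapse to a function of a single variable and could not separate $\XOR^{-1}(1)$ from $\XOR^{-1}(0)$. Once that is nailed down, the verification is a direct case check with no further machinery required.
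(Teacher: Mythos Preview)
Your proof is correct. Note, however, that the paper does not actually prove this lemma: it is quoted verbatim from \cite[Lemma~4.5]{KhannaSTW00} and used as a black box (the paper explicitly says it will ``exploit the following results in a black-box manner''), so there is no in-paper proof to compare against. Your argument---picking a satisfying assignment $s$, observing that $s\notin\{\vec 0,\vec 1\}$ via C-closedness and the failure of $0$-validity, and then substituting $u$ and $v$ along the $1$- and $0$-coordinates of $s$---is essentially the standard construction from the cited source, and it cleanly yields a strict $1$-implementation with no auxiliary variables.
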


\begin{lemma}[{\cite[Lemma~4.6]{KhannaSTW00}}]\label{lem:not-c-closed}
Let $f_0$, $f_1$, and $g$ be non-trivial constraints, possibly identical, which are not 0-valid,
not 1-valid, and not C-closed, respectively. Then  $\{f_0, f_1, g\} \stackrel{s}{\Longrightarrow} \{T, F\}$.
\end{lemma}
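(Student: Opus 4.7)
My plan is to construct strict $\alpha$-implementations of both unary constants using constraint applications of $f_0$, $f_1$, and $g$ together with auxiliary variables. Each of the three hypotheses contributes a distinct gadget: non-0-validity of $f_0$ will provide a ``penalty for $z=0$'', non-1-validity of $f_1$ a ``penalty for $z=1$'' on the auxiliary side, and non-C-closedness of $g$ an asymmetric binary gadget that breaks the complementation symmetry which would otherwise prevent any implementation of $T$ or $F$ individually. By symmetry it suffices to describe the implementation of $T$; the implementation of $F$ follows by exchanging the roles of $f_0$ and $f_1$ (and of $0$ and $1$).

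I would first dispose of the easy subcase. Identifying all arguments of $f_0$ to a single primary variable $z$ yields $f_0(z, z, \dots, z)$, which equals $f_0(0,\dots,0) = 0$ at $z = 0$ and $f_0(1,\dots,1)$ at $z = 1$; if $f_0(1,\dots,1) = 1$, this one application is already a strict $1$-implementation of $T$. Otherwise $f_0$ is satisfied only on ``mixed'' inputs, so fix any $x^\star \in \{0,1\}^{\ar(f_0)}$ with $f_0(x^\star) = 1$ in which both values appear. In parallel, exploit $g$: since $g$ is not C-closed, pick $a$ with $g(a) = 1$ and $g(\bar a) = 0$ (swapping $a$ and $\bar a$ if necessary) and define the binary gadget
$$\tilde g(z,w) \;:=\; g(\xi_1, \dots, \xi_{\ar(g)}), \qquad \xi_i = z \text{ if } a_i = 1,\quad \xi_i = w \text{ if } a_i = 0,$$
which satisfies $\tilde g(1,0) = 1$ and $\tilde g(0,1) = 0$. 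In the hard subcase I would then assemble the implementation of $T$ from: one application of $f_0$ whose positions are substituted according to $x^\star$ (using $z$ where $x^\star_i = 1$ and fresh auxiliaries $y_i$ elsewhere); a tunable number $M$ of copies of $\tilde g(z, w_j)$ on fresh auxiliaries $w_1, \dots, w_M$; and supporting applications of $f_1$ or of the unary $f_1(w, \dots, w)$ on auxiliary variables alone to neutralize potential gains from setting auxiliaries to all-ones. The multiplicity $M$ is chosen so that the directional asymmetry $\tilde g(1,0) > \tilde g(0,1)$ tips the total count strictly in favor of $z = 1$.

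The main obstacle is verifying strictness in the hard subcase: one must rule out the possibility that flipping $z$ to $0$ and complementing all auxiliaries recovers the full count $\alpha$. The asymmetry of $\tilde g$ guarantees that every copy flipped from $(z, w_j) = (1, 0)$ to $(0, 1)$ strictly loses one satisfied constraint, but the side effects on the $f_0$- and $f_1$-copies must be tracked carefully --- in the awkward sub-cases where $g(0,\dots,0)$ or $g(1,\dots,1)$ happens to equal $1$, one may need to compose $\tilde g$ with a unary equalizer, or replace $a$ by a different non-C-closed witness of $g$ to restore the directional behaviour. Since $\ar(f_0), \ar(f_1), \ar(g)$ and the multiplicity $M$ are all constants, the resulting gadget has constant size and constant $\alpha$, as required by the definition of a strict implementation.
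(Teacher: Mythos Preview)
The paper does not prove this lemma at all: it is quoted from \cite[Lemma~4.6]{KhannaSTW00} and, as stated in the paragraph preceding Lemma~\ref{lem:strict-transitive}, is used ``in a black-box manner.'' There is therefore no in-paper argument to compare your proposal against.

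Regarding the proposal itself, it is a plan rather than a proof. The easy subcase (identifying all arguments of $f_0$) is correct, and the asymmetric binary gadget $\tilde g$ you extract from $g$ is the right object. But in the hard subcase you do not actually carry out the construction or its analysis: the phrases ``a tunable number $M$ of copies'' and ``supporting applications of $f_1$ \dots\ to neutralize potential gains'' are not accompanied by any verification, and you yourself flag the strictness check as ``the main obstacle'' and leave the awkward diagonal cases ($\tilde g(0,0)=1$ or $\tilde g(1,1)=1$) unresolved. Concretely, if $\tilde g(0,0)=1$ then at $z=0$ every copy $\tilde g(0,w_j)$ is satisfied by $w_j=0$, so the $M$ copies contribute $M$ on \emph{both} sides and the multiplicity trick buys nothing; the $f_1$-based ``equalizer'' you allude to would have to do all the work, and you have not specified it. The actual argument in \cite{KhannaSTW00} avoids large multiplicities altogether and proceeds by a short case analysis on the values $\tilde g(0,0)$ and $\tilde g(1,1)$, combining only a constant (two or three) number of applications in each case.
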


\begin{lemma}[{\cite[Lemma~4.11]{KhannaSTW00}}]\label{lem:not-c-closed-xor}
Let $f$ be a~constraint which is not 2-monotone. Then $\{f, T, F\} \stackrel{s}{\Longrightarrow} \XOR$.
\end{lemma}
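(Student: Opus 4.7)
The plan is to strictly implement $\XOR$ from applications of $f$, $T$, and $F$ in three phases: (a) reduce to a 2-variable non-2-monotone restriction of $f$; (b) enumerate the possibilities for such a 2-ary function; (c) produce $\XOR$ in each case by combining restrictions.

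For phase (a), I would first observe that $\{f, T, F\}$ strictly implements every function obtained from $f$ by fixing some of its arguments to constants: introduce a fresh auxiliary variable $y$ for each position to be fixed and attach enough unit-weighted copies of $T(y)$ or $F(y)$ that, in any maximum-weight assignment, the auxiliary necessarily takes the prescribed value, so what remains is exactly the chosen restriction of~$f$. By the transitivity of strict implementation (Lemma~\ref{lem:strict-transitive}), it suffices to handle the case where $f$ has been restricted to a 2-variable function.

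The crux is phase (b), the structural claim: if $f$ is not 2-monotone then some substitution of constants into all but two of its coordinates yields a non-trivial 2-ary function that is itself not 2-monotone. I expect this to be the main obstacle; it amounts to saying that the failure of the 2-monotone identity $f = (\bigwedge x_{i_\ell}) \lor (\bigwedge \neg x_{j_\ell})$ always localizes to some 2-dimensional face of the Boolean cube. The contrapositive looks cleanest: assuming every 2-variable restriction of $f$ is 2-monotone or trivial, one aggregates the per-restriction ``positive/negative corner'' data into globally consistent index sets $I, J \subseteq [k]$ and verifies that $f = (\bigwedge_{i \in I} x_i) \lor (\bigwedge_{j \in J} \neg x_j)$. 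A direct case check over the sixteen 2-ary Boolean functions shows that the only non-trivial non-2-monotone ones are $\OR$, $\text{NAND}$, $\XOR$, $y \land \neg z$, and $\neg y \land z$.

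For phase (c) I would handle the five possibilities separately. If the restriction is $\XOR$, we are done. If it is $g(y,z) = y \land \neg z$ (or the symmetric $\neg y \land z$), then the two applications $g(x_1, x_2)$ and $g(x_2, x_1)$ on primary variables $x_1, x_2$ form a strict 1-implementation of~$\XOR$: exactly one of them is satisfied iff $x_1 \oplus x_2 = 1$. If the restriction is $\OR$, the triple $\OR(x_1, x_2), F(x_1), F(x_2)$ is a strict 2-implementation of $\text{NAND}$ (on $(1,1)$ only one application is satisfied, otherwise exactly two are); a dual gadget retrieves $\OR$ from $\text{NAND}$. Finally, the pair $\OR(x_1, x_2), \text{NAND}(x_1, x_2)$ is a strict 2-implementation of $\XOR$ via the identity $\XOR = \OR \land \text{NAND}$. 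Chaining these strict implementations through Lemma~\ref{lem:strict-transitive} gives $\{f, T, F\} \stackrel{s}{\Longrightarrow} \XOR$.
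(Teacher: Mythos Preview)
The paper does not supply its own proof of this lemma; it is quoted verbatim as a black-box from Khanna--Sudan--Trevisan--Williamson, so there is no in-paper argument to compare against. What you have written is an attempt to reconstruct that external result.

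Your phases (a) and (c) are sound. The strict implementation of any constant-restriction of $f$ from $\{f,T,F\}$ is routine (one copy of each $T(y_i)$ or $F(y_i)$ already suffices for a strict $(\alpha=k-p+1)$-implementation), and the gadget checks in phase (c) are all correct: the pair $g(x_1,x_2),\,g(x_2,x_1)$ for $g(y,z)=y\land\neg z$ really is a strict $1$-implementation of $\XOR$, and the $\OR\leadsto\text{NAND}\leadsto\XOR$ chain works as stated.

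The gap is phase (b), exactly where you flag it. Your contrapositive sketch---``aggregate the per-restriction positive/negative corner data into globally consistent index sets $I,J$''---is not yet an argument. The difficulty is that for a fixed pair of free coordinates, \emph{different} settings of the remaining $k-2$ coordinates typically produce \emph{different} 2-monotone restrictions with different $(I,J)$-data, and you give no mechanism for why these local decompositions should be mutually consistent. An induction on the arity is the natural route: both $f|_{x_k=0}$ and $f|_{x_k=1}$ are 2-monotone by the inductive hypothesis, so each has a principal-filter-plus-principal-ideal decomposition, and one must then use the 2-monotonicity of the restrictions that keep $x_k$ free to force these two decompositions to fit together into a single one for $f$. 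Carrying this out requires a genuine case analysis on how the two decompositions interact (e.g.\ whether the positive term is present in one slice but absent in the other), and that analysis is the actual substance of the lemma. As written, phase (b) is a statement of what needs to be shown rather than a proof of it.
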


\begin{lemma}\label{lem:nphard-xor}
Let $\Gamma$ be a~constraint language that is neither 0-valid, 1-valid, nor 2-monotone. Then $\Gamma~\stackrel{s}{\Longrightarrow} \XOR$.
\end{lemma}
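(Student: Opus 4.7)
The plan is to perform a case analysis on whether $\Gamma$ contains a non-trivial C-closed constraint that is not $0$-valid, so that we can use either Lemma~\ref{lem:c-closed} directly or the combination of Lemmas~\ref{lem:not-c-closed} and~\ref{lem:not-c-closed-xor} glued together by transitivity (Lemma~\ref{lem:strict-transitive}).

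\textbf{Case A.} Suppose there is a non-trivial $f \in \Gamma$ that is C-closed and not $0$-valid. Then Lemma~\ref{lem:c-closed} gives $\{f\} \stackrel{s}{\Longrightarrow} \XOR$, and since $f \in \Gamma$ the implementation is an implementation using $\Gamma$, so $\Gamma \stackrel{s}{\Longrightarrow} \XOR$ directly.

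\textbf{Case B.} Assume no such constraint exists, i.e., every non-trivial C-closed constraint in $\Gamma$ is $0$-valid (equivalently, $1$-valid, since for C-closed constraints $f(0,\dots,0)=f(1,\dots,1)$). I would now verify the hypotheses of Lemma~\ref{lem:not-c-closed}: since $\Gamma$ is not $0$-valid there exists a non-trivial $f_0 \in \Gamma$ that is not $0$-valid, and since $\Gamma$ is not $1$-valid there exists a non-trivial $f_1 \in \Gamma$ that is not $1$-valid; by the case assumption neither of these can be C-closed, so in particular $g := f_0$ is a non-trivial, non-C-closed member of $\Gamma$. Lemma~\ref{lem:not-c-closed} then yields $\Gamma \stackrel{s}{\Longrightarrow} \{T,F\}$.

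To finish I would invoke the non-$2$-monotone hypothesis: fix a non-trivial $h \in \Gamma$ that is not $2$-monotone. Lemma~\ref{lem:not-c-closed-xor} gives $\{h,T,F\} \stackrel{s}{\Longrightarrow} \XOR$. Since $h \in \Gamma$ trivially implements itself (with a single unit-weight application as a strict $1$-implementation) and $\Gamma \stackrel{s}{\Longrightarrow} \{T,F\}$ from the previous paragraph, we have $\Gamma \stackrel{s}{\Longrightarrow} \{h,T,F\}$; applying Lemma~\ref{lem:strict-transitive} then yields $\Gamma \stackrel{s}{\Longrightarrow} \XOR$, completing the proof.

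The only subtle point I anticipate is making sure the case split is exhaustive and properly uses the equivalence between failing $0$-validity and $1$-validity for C-closed constraints — this is what forces any witness of non-$0$-validity or non-$1$-validity in Case B to also serve as the non-C-closed witness $g$, so that all three constraints demanded by Lemma~\ref{lem:not-c-closed} genuinely lie in $\Gamma$. Everything else is a routine assembly of the black-box lemmas through strict-implementation transitivity.
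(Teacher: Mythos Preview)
Your proof is correct and follows essentially the same route as the paper: a case split based on C-closedness, handled by Lemma~\ref{lem:c-closed} in one branch and by Lemma~\ref{lem:not-c-closed} followed by Lemma~\ref{lem:not-c-closed-xor} (glued via Lemma~\ref{lem:strict-transitive}) in the other. The only cosmetic difference is that the paper splits on whether the whole language~$\Gamma$ is C-closed, whereas you split on the existence of a single non-trivial C-closed non-$0$-valid constraint; your partition is a mild refinement of the paper's and your verification that the witnesses $f_0,f_1$ are automatically non-C-closed in Case~B is a detail the paper leaves implicit.
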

\begin{proof}
If $\Gamma$ is C-closed, then the~claim follows from Lemma~\ref{lem:c-closed}.
Otherwise, we~can strictly implement $T, F$ (Lemma~\ref{lem:not-c-closed}) and then rely on transitivity (Lemma~\ref{lem:strict-transitive}) to implement $\XOR$ with Lemma~\ref{lem:not-c-closed-xor}.
\end{proof}

\subsection{Parameterized complexity} \label{sec:prelims:fpt}
A~\emph{parameterized problem} is a~decision problem in which every input has an associated positive integer \emph{parameter} that captures its complexity in some well-defined way. In our study of~CSPs we~use the~number of~variables as the~parameter, but other choices have been considered~\cite{GutinY17,KratschMW16,KratschW10}. For a~parameterized problem~$a~\subseteq \Sigma^* \times \mathbb{N}$, a~decision problem~$B \subseteq \Sigma^*$, and a~function~$f \colon \mathbb{N} \to \mathbb{N}$, a~\emph{compression} of~$A$ into~$B$ of~size~$f$ is an algorithm that, on input~$(x,k) \in \Sigma^* \times \mathbb{N}$, takes time polynomial in~$|x| + k$ and outputs an instance~$y \in \Sigma^*$ such that~$(x,k) \in A$ if and only if~$y \in B$, and such that~$|y| \leq f(k)$. A~\emph{kernelization} algorithm of~size~$f$ for problem~$A$ reduces any instance~$(x,k)$ to an~$f(k)$-sized equivalent instance of~the~\emph{same problem} in polynomial time. 


\section{Characteristic polynomials}\label{sec:polynomials}

In this section we~provide the~technique necessary for expressing one constraint system by another without introducing too many auxiliary variables.
This insight is based on interpreting constraints as multilinear polynomials.

\begin{definition}
For a~$k$-ary constraint \bmp{$f \colon \{0,1\}^k \to \{0,1\}$} its characteristic polynomial $P_f$ is the~unique $k$-ary multilinear polynomial over $\mathbb{R}$ satisfying $f(x) = P_f(x)$ for any $x \in \{0,1\}^k$.
\end{definition}


It is easy to construct such a~polynomial. \bmp{First define} 
$P_s(x_1, x_2, \dots, x_k) = \prod_{i=1}^k R_i^s(x_i)$ for a~vector $s \in \{0,1\}^k$,
where $R_i^s(x) = x$ if $s_i = 1$ and $R_i^s(x) = 1 - x$ otherwise.
Formally, $P_s$ is the~sequence of~coefficients one gets by expanding all parentheses.
It is easy to see that they are all integers.
It holds that $P_s(s) = 1$, \bmp{while} $P_s(x) = 0$ for any $x \neq s$.
For a~constraint $f$ its characteristic polynomial is given as
$P_f(x_1, x_2, \dots, x_k) = \sum_{s: \, f(s) = 1} P_s(x_1, x_2, \dots, x_k).$
It is known that no other multilinear polynomial can take identical values on $\{0,1\}^k$~\cite{NisanS94, Williams07}.
This also means we~can interchangeably analyze polynomials as formal objects and as functions on $\{0,1\}^k$.   

\begin{observation}
For any Boolean function~$f$, the~coefficients of~the~characteristic polynomial~$P_f$ are integers.
\end{observation}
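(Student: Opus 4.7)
The plan is to invoke the explicit construction of~$P_f$ given just above the~observation and argue that it already exhibits integer coefficients, then appeal to uniqueness of~the~multilinear representation to conclude that this is~the~only such polynomial.

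First I~would analyze the~building blocks $P_s$. For any $s \in \{0,1\}^k$, the~polynomial $P_s$ is a~product of~$k$ univariate factors, each of~which is either $x_i$ or $1 - x_i$. Both of~these are polynomials whose coefficients lie in $\{-1,0,1\} \subseteq \zz$. Since the~set of~multivariate polynomials with integer coefficients forms a~subring of~$\mathbb{R}[x_1,\ldots,x_k]$ (closed under multiplication and addition), the~product $P_s = \prod_{i=1}^k R_i^s(x_i)$ has integer coefficients.

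Next I~would note that the~characteristic polynomial~$P_f$ is, by construction, the~finite sum $\sum_{s:\, f(s)=1} P_s$ over $s \in \{0,1\}^k$. A~sum of~polynomials with integer coefficients again has integer coefficients, so the~constructed polynomial lies in $\zz[x_1,\ldots,x_k]$ and is multilinear (as a~sum of~multilinear pieces). It agrees with $f$ on $\{0,1\}^k$ because $P_s(s) = 1$ and $P_s(x) = 0$ for $x \neq s$, so exactly the~summands corresponding to satisfying assignments contribute the~correct value~$1$ on~$s$.

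Finally I~would invoke uniqueness of~the~multilinear representation (cited from \cite{NisanS94, Williams07} in the~preceding paragraph): there is only one multilinear polynomial in $\mathbb{R}[x_1,\ldots,x_k]$ that coincides with $f$ on $\{0,1\}^k$. Since our explicit construction yields such a~polynomial with integer coefficients, it must equal~$P_f$, and therefore $P_f$ itself has integer coefficients. There is no real obstacle here; the~entire content of~the~observation is already implicit in the~construction given above it, and the~only thing to verify carefully is that uniqueness forces the~constructed polynomial to be~$P_f$ rather than merely \emph{a}~multilinear polynomial agreeing with~$f$.
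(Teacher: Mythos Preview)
Your proposal is correct and mirrors exactly what the paper does: the observation is not given a separate proof but is meant to follow immediately from the explicit construction $P_f = \sum_{s:\,f(s)=1} P_s$ with each $P_s$ a product of factors $x_i$ or $1-x_i$, combined with the cited uniqueness of the multilinear representation. You have spelled out precisely this implicit argument, so there is nothing to add or change.
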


Let $\deg (f) = \deg (P_f)$ and $\deg (\Gamma) = \max_{f \in \Gamma} \deg (f)$. For a~$k$-ary constraint $f$ we~refer to the~coefficient at the~unique $k$-ary monomial in $P_f$ as the~leading coefficient.
The leading coefficient is non-zero iff.~$\deg(P_f) = k$.

If $g$ is expressible by $f$ with literals, then we~can obtain $P_g$ from $P_f$ by replacing each literal with negation $\neg x_i$ by $1-x_i$ and expanding the~parenthesis within monomials.
If $g$ is expressible by $f$ with constants, then we~just substitute 0 or 1 for particular variables and remove monomials containing 0.
These transformations imply $\deg(\gtf) = \deg(\glit) = \deg(\gneg) = \deg(\Gamma)$.

As an example, consider \textsc{Max 3-NAE-SAT}.
The function $\NAE_3(x_1, x_2, x_3)$ has the~degree-2 characteristic polynomial $x_1 + x_2 + x_3 - x_1x_2 -x_1x_3 - x_2x_3$, which allows us to construct a~compression for \textsc{Max 3-NAE-SAT} of~size $\Oh(n^2\log n)$ by summing coefficients at all $\Oh(n^2)$ monomials.
On the~other hand, $\OR_2(x_1, x_2) = x_1 + x_2 - x_1x_2 = \NAE_3(x_1, x_2, 0)$, which indicates that solving \textsc{Max 3-NAE-SAT} should not be easier than \textsc{Max 2-CNF-SAT}.
We will formalize these arguments in the~next section.

We are now going to show that the~set of~characteristic polynomials of~all constraints expressible by $f$ with constants spans the~linear space of~multilinear polynomials over $\mathbb{Q}$ with degrees at most $\deg (f)$.
We first prove that this set contains polynomials of~all degrees up to $\deg (f)$ and then use them to express a~basis of~the~linear space.

\begin{lemma}\label{lem:express-g}
Let $f$ be a~$k$-ary constraint.
For any
$1 \le d \le \deg(f)$ there exists a~$d$-ary constraint $g$ expressible by $f$ with constants, such that its characteristic polynomial $P_g$ has degree exactly $d$, i.e., its leading coefficient is non-zero.
\end{lemma}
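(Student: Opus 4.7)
The plan is to prove the lemma by first reducing $P_f$ to a polynomial that uses exactly the variables appearing in some degree-$D$ monomial (where $D = \deg(f)$), and then to iteratively drop variables one at a time, substituting $0$ or $1$ each time, while keeping a witnessing monomial of degree at least $d$ alive. Since each substitution replaces one variable by a constant, the resulting $d$-ary constraint is by definition expressible by $f$ with constants, and since the final polynomial is multilinear in $d$ variables, any surviving degree-$\ge d$ monomial pins the degree to exactly $d$.

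More concretely, I would first pick indices $i_1,\dots,i_D$ such that the monomial $x_{i_1}\cdots x_{i_D}$ has nonzero coefficient in $P_f$, and substitute $0$ for every variable outside this set. Since no other monomial of $P_f$ can reduce to $x_{i_1}\cdots x_{i_D}$ under such a substitution, the resulting multilinear polynomial $P'$ on $D$ variables still has this monomial with its original coefficient, so $\deg(P')=D$. Then I would apply the following iterative step, to be proved as a side claim: \emph{If $Q$ is a multilinear polynomial in $a$ variables with $\deg(Q)\ge d$ and $a>d$, then substituting $0$ or $1$ for a suitable single variable yields a multilinear polynomial $Q'$ in $a-1$ variables with $\deg(Q')\ge d$.} Iterating this $D-d$ times brings us to arity $d$ with degree at least $d$, hence exactly $d$.

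To prove the side claim, let $M$ be a monomial of $Q$ of maximum degree $m\ge d$ with nonzero coefficient $c_M$. If $a>m$, there is a variable $x_j$ not occurring in $M$; substituting $x_j=0$ annihilates all monomials containing $x_j$ but leaves $M$ intact, so $\deg(Q')\ge m\ge d$. Otherwise $a=m$ and $M$ uses all variables; pick any $x_j$ in $M$ and let $c_{M/x_j}$ be the coefficient in $Q$ of the monomial obtained from $M$ by dropping $x_j$. A~direct calculation shows that the coefficient of $M/x_j$ in $Q'$ equals $c_{M/x_j}$ after substituting $x_j=0$, and $c_M+c_{M/x_j}$ after substituting $x_j=1$; since $c_M\ne 0$, at least one of these two values is nonzero, and that substitution produces a $Q'$ with a nonzero monomial of degree $m-1\ge d$.

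The main obstacle is precisely this case $a=m$: dropping any single variable must lower the degree of the top monomial, so one has to exploit the interaction between the coefficients of $M$ and $M/x_j$ to ensure that at least one of the two constant substitutions keeps a degree-$(m-1)$ term alive. Once this case is settled, the rest is a clean induction on $D-d$, and the composition of all single-variable substitutions (including the initial zeroing-out in the first reduction) exhibits $g$ as expressible by $f$ with constants, with $\deg(P_g)=d$ as required.
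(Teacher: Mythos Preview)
Your proof is correct and follows essentially the same approach as the paper: both first restrict to the $D$ variables of a top-degree monomial by substituting $0$ elsewhere, then descend one variable at a time by substituting $0$ or $1$ so as to keep a high-degree monomial alive. The paper phrases the descent slightly differently (it checks globally whether any degree-$d$ monomial is present and, if not, substitutes $1$), whereas you fix a variable $x_j$ and compare $c_{M/x_j}$ with $c_M+c_{M/x_j}$, but the underlying idea and the key case analysis are the same; note also that your case $a>m$ never actually arises in the iteration, since after the initial reduction one always has $a=m$.
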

\begin{proof}
The degree of~$d$-ary $P_g$ can be at most $d$ so we~just need to show that the~leading coefficient would be non-zero.
We proceed by induction over $i = \deg(f) - d$.
Suppose that $d = \deg(f)$ and consider indices $j_1, \dots, j_d$ for which a~monomial $\prod_{i=1}^d x_{j_i}$ has a~non-zero coefficient~$\alpha$.
We define $g_0(x_1, x_2, \dots, x_d) = f(\xi_1, \xi_2, \dots, \xi_k)$, where $\xi_{j_i} = x_i$ and other $\xi_j$ are set to 0.
It is easy to verify that the~leading coefficient of~$P_{g_0}$ is~$\alpha$, which proves the~induction basis for $i=0$.

Suppose now the~induction hypothesis holds for $i-1$, that is,
$g_{d+1}(x_1, x_2, \dots, x_{d+1})$ is expressible by $f$ with constants and $\deg(P_{g_{d+1}}) = d+1$.
First consider the~case where $P_{g_{d+1}}$ has a~non-zero coefficient at some monomial of~degree $d$.
Let $j$ be the~unique index for which $x_j$ does not occur in this monomial.
Then $g_d(x_1, x_2, \dots, x_d) = g_{d+1}(x_1, \dots, x_{j-1}, 0 , x_{j}, \dots, x_{d})$ is expressible by $f$ with constants and has degree $d$.

Finally, suppose that $P_{g_{d+1}}$ has zero coefficients at all monomials of~degree $d$.
We define $g_d(x_1, x_2, \dots, x_d) = g_{d+1}(x_1, x_2, \dots, x_d, 1)$.
Since the~monomial $\prod_{i=1}^d x_i$ does not appear in $P_{g_{d+1}}$, the~leading coefficient in $P_{g_d}$ is the~same as in $P_{g_{d+1}}$.
\end{proof}

\begin{lemma}\label{lem:identity}
Let $f$ be a~non-trivial constraint and $P$ be a~multilinear polynomial over $\mathbb{Q}$ on $\ell$ variables of~degree $0 \le d \le deg(f)$.
There exists a~sequence of~constraint applications $\langle f_i, (j_i^1, \dots, j_i^{\ar(f_i)}), \alpha_i\rangle_{i=1}^M$ on $\ell$ variables, where each constraint $f_i$ is expressible by $f$ with constants, and $\alpha_i \in \mathbb{Q}$, such that the~following polynomial identity holds.

$$P(x_1, \dots, x_\ell) = \sum_{i=1}^M \alpha_i \cdot P_{f_i}(x_{j_i^1}, \dots, x_{j_i^{\ar(f_i)}}).$$
\end{lemma}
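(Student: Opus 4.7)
The plan is to proceed by induction on $d = \deg(P)$, decomposing $P$ one monomial degree at a time. Since $P$ is multilinear, I first write $P = \sum_{S \subseteq [\ell],\ |S| \le d} c_S \prod_{i \in S} x_i$ with $c_S \in \mathbb{Q}$; by linearity it then suffices to realize each monomial $\prod_{i \in S} x_i$ (and, when $d = 0$, the constant $1$) as a rational combination of characteristic polynomials of constraints expressible by $f$ with constants, applied to subsets of the variables $x_1, \dots, x_\ell$.

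For the base case $d = 0$, I only need the constant polynomial $1$. Since $f$ is non-trivial, there is some $a \in \{0,1\}^{\ar(f)}$ with $f(a) = 1$. Substituting the constants $a_1, \dots, a_{\ar(f)}$ into every argument of $f$ yields a $0$-ary constraint $g_0$, expressible by $f$ with constants, whose characteristic polynomial is identically $1$. A single application of $g_0$ with coefficient $c$ then realizes any constant $c \in \mathbb{Q}$.

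For the inductive step, assume the claim for polynomials of degree $<d$ and peel off the degree-$d$ monomials of $P$. Lemma~\ref{lem:express-g} supplies a $d$-ary constraint $g_d$ expressible by $f$ with constants whose characteristic polynomial has the form
$$P_{g_d}(y_1, \dots, y_d) = \alpha \prod_{j=1}^d y_j + R(y_1, \dots, y_d),$$
with $\alpha \neq 0$ and $\deg R < d$. For each $S = \{i_1 < \dots < i_d\} \subseteq [\ell]$, substituting $y_j = x_{i_j}$ yields
$$\prod_{i \in S} x_i = \frac{1}{\alpha}\bigl(P_{g_d}(x_{i_1}, \dots, x_{i_d}) - R(x_{i_1}, \dots, x_{i_d})\bigr).$$
Aggregating the first term over all degree-$d$ monomials of $P$ with coefficients $c_S/\alpha$ produces a partial list of constraint applications. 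The leftover contribution, namely the lower-degree part of $P$ minus the accumulated residues $\tfrac{c_S}{\alpha}R$, is a polynomial in $x_1, \dots, x_\ell$ of degree $<d$; the induction hypothesis expresses it as a further list of applications, and concatenating the two lists proves the lemma.

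The main obstacle is the base case: without non-triviality of $f$ there is no way to manufacture the constant polynomial $1$, which is why this hypothesis enters the statement. A second point that deserves mention is that every constraint used in the construction is indeed expressible by $f$ with constants, since substituting further constants into a constraint already expressible by $f$ with constants produces another such constraint, so the inductive invariant is preserved throughout the peeling process.
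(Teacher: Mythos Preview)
Your proof is correct and follows essentially the same approach as the paper: induct on the degree, use Lemma~\ref{lem:express-g} to obtain a $d$-ary constraint $g_d \in \{f\}^{T,F}$ with nonzero leading coefficient, subtract off the appropriate $\mathbb{Q}$-combination of applications of $P_{g_d}$ to kill all degree-$d$ monomials, and apply the induction hypothesis to the remaining polynomial of lower degree. The paper's base case is phrased as $P(x) = \alpha \cdot f(s_T)$ for a satisfying assignment $s_T$, which is exactly your $0$-ary constraint $g_0$; your explicit remark that the ``expressible with constants'' property is closed under further constant substitution is a nice touch that the paper leaves implicit.
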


Before proving this claim, let us demonstrate it on a~less obvious example than the~one with $\NAE_3$ and $\OR_2$. 
Let $P(x_1, x_2, x_3)$ be the~characteristic polynomial of~the~constraint $\OR_3(x_1, x_2, \neg x_3)$:
\begin{align*}
P(x_1, x_2, x_3) &= x_1x_2x_3 + (1-x_1)x_2x_3 + x_1(1-x_2)x_3 + x_1x_2(1-x_3) + (1-x_1)x_2(1-x_3)  \\
&+ x_1(1-x_2)(1-x_3) + (1-x_1)(1-x_2)(1-x_3) = 1 - x_3 + x_1x_3 + x_2x_3 - x_1x_2x_3.
\end{align*}
We will represent it with characteristic polynomials from $\{\EX_3\}^{T,F}$, where $\EX_3(x_1, x_2, x_3) = 1$ iff. exactly one variable is 1. Its characteristic polynomial $Q$ is given as:
\begin{align*}
Q(x_1, x_2, x_3) &= x_1(1-x_2)(1-x_3) + (1-x_1)x_2(1-x_3) + (1-x_1)(1-x_2)x_3 \\
&= x_1 + x_2 + x_3 -2x_1x_2 -2x_1x_3  -2x_2x_3 +3x_1x_2x_3.
\end{align*}
We can express $P$ as the~following linear combination where, e.g., $Q(x_1, x_2, 0)$ is the~characteristic polynomial for $\EX_3(x_1, x_2, 0)$, which is a~binary constraint expressible by $\EX_3$ with constants:
\begin{align*}
P(x_1, x_2, x_3) =& -\frac{1}{3}Q(x_1, x_2, x_3) +\frac{1}{3}Q(x_1, x_2, 0) -\frac{1}{6}Q(x_1, x_3, 0) -\frac{1}{6}Q(x_2, x_3, 0) \\
& + \frac{1}{6}Q(x_1, 0, 0) +\frac{1}{6}Q(x_2, 0, 0) - \frac{1}{3}Q(x_3, 0, 0) + Q(1,0,0).
\end{align*}

\begin{proof}[Proof~of~Lemma~\ref{lem:identity}]
%
%

We proceed by induction over the~degree of~$P$.
Since $f$ is non-trivial, it admits a~satisfying assignment $s_T$.
If $P$ is constant, then $P(x) = \alpha~\cdot f(s_T)$ for some $\alpha~\in \mathbb{Q}$.

Suppose now $d = \deg(P) \ge 1$. For each~$S \in \binom{[\ell]}{d}$ let $\alpha_S$ denote the~(potentially zero) coefficient in $P$ at the~monomial $\prod_{i \in S} x_i$.
By Lemma~\ref{lem:express-g} there is a~$d$-ary constraint $f_d$, which is expressible by $f$ with constants and $\deg (P_{f_d}) = d$.
Let $\beta_d$ denote the~leading coefficient of~$P_{f_d}$.
The polynomial

$$P'(x_1, \dots, x_\ell) = P(x_1, \dots, x_\ell) - \sum_{\substack{\{i_1, \ldots, i_d\} = S \in \binom{[\ell]}{d} \\ i_1 < \ldots < i_d}} \frac{\alpha_S}{\beta_d} \cdot P_{f_d}(x_{i_1}, \dots, x_{i_d})$$
has no monomials of~degree $d$, since each term in the~sum subtracts exactly one of~them.
\mic{The polynomial $P'$ has degree at most $d-1$,
so we~can apply the~induction hypothesis to it
and represent $P'$ as~a~linear combination of~characteristic polynomials of~constraints from $\{f\}^{T,F}$.
We~obtain the~claim by adding these polynomials to the~sum above.}
\end{proof}

Since ${f_i}$ and $P_{f_i}$ coincide as functions on $\{0,1\}^{\ar(f_i)}$, Lemma~\ref{lem:identity} allows us to represent any constraint of~degree at most $\deg(f)$ as a~linear combination of~constraints from $\{f\}^{T,F}$.

\begin{proposition}\label{pro:identity}
Let $g, f$ be constraints, such that $f$ is non-trivial and $\deg(g) \le \deg(f)$.
There exists a~sequence of~constraint applications $\langle f_i, (j_i^1, \dots, j_i^{\ar(f_i)}), \alpha_i\rangle_{i=1}^M$ on $\ar(g)$ variables, where each constraint $f_i$ is expressible by $f$ with constants, and $\alpha_i \in \mathbb{Q}$, such that the~following identity holds for all binary vectors.

$$g(x_1, \dots, x_{\ar(g)}) = \sum_{i=1}^M \alpha_i \cdot f_i(x_{j_i^1}, \dots, x_{j_i^{\ar(f_i)}}).$$
\end{proposition}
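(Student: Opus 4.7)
The plan is to reduce Proposition~\ref{pro:identity} directly to Lemma~\ref{lem:identity} by taking $P$ to be the characteristic polynomial $P_g$ of the target constraint $g$. Since $\deg(P_g) = \deg(g) \le \deg(f)$ and $P_g$ is a multilinear polynomial over $\mathbb{Q}$ (indeed over $\mathbb{Z}$) on $\ell := \ar(g)$ variables, Lemma~\ref{lem:identity} is directly applicable and yields a sequence of constraint applications $\langle f_i, (j_i^1, \ldots, j_i^{\ar(f_i)}), \alpha_i \rangle_{i=1}^M$ on $\ell$ variables, with each $f_i$ expressible by $f$ with constants and $\alpha_i \in \mathbb{Q}$, such that the formal polynomial identity
$$P_g(x_1, \ldots, x_\ell) = \sum_{i=1}^M \alpha_i \cdot P_{f_i}(x_{j_i^1}, \ldots, x_{j_i^{\ar(f_i)}})$$
holds.

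Next I would translate this polynomial identity into the claimed identity of functions on $\{0,1\}^\ell$. By the definition of the characteristic polynomial, $g(x) = P_g(x)$ for every $x \in \{0,1\}^\ell$, and similarly $f_i(y) = P_{f_i}(y)$ for every $y \in \{0,1\}^{\ar(f_i)}$. Restricting all variables in the polynomial identity above to Boolean values (which is valid since the identity holds as formal polynomials, hence pointwise on all of $\mathbb{R}^\ell$ and in particular on $\{0,1\}^\ell$) turns $P_g$ into $g$ on the left-hand side, and each $P_{f_i}(x_{j_i^1}, \ldots, x_{j_i^{\ar(f_i)}})$ into $f_i(x_{j_i^1}, \ldots, x_{j_i^{\ar(f_i)}})$ on the right-hand side. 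This yields exactly the required identity.

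There is essentially no obstacle: the entire work has already been done in Lemma~\ref{lem:identity} (which constructs the decomposition inductively over the degree and uses Lemma~\ref{lem:express-g} to obtain leading-coefficient-realizing constraints of every intermediate degree). The only thing to verify is the legality of the input to Lemma~\ref{lem:identity}: $P_g$ is multilinear over $\mathbb{Q}$ (in fact its coefficients are integers by the earlier observation), its number of variables is $\ell = \ar(g)$, and its degree is $\deg(g) \le \deg(f)$, so all hypotheses are met. Hence the proposition follows immediately.
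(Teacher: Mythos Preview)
Your proposal is correct and matches the paper's own treatment: the paper also derives the proposition directly from Lemma~\ref{lem:identity} applied to $P_g$, noting that since $f_i$ and $P_{f_i}$ coincide on $\{0,1\}^{\ar(f_i)}$, the polynomial identity immediately yields the functional identity on binary vectors.
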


This resembles the~idea~of~implementation, where we~are additionally equip constraints with (potentially negative) rational weights, but in return it does not require introducing any auxiliary variables.

\section{Reductions between constraint systems}
\label{sec:reductions}

We first formalize our notion of~reduction.
The objects we~work with are the~constraint systems and the~reductions between them imply analogous relations between the~associated decision problems.
The reduction is crafted \bmp{in such a~way that it} preserves the~numbers of~variables and constraints up to a~constant factor, and the~total weight up to a~polynomial factor.

\begin{definition}\label{def:transform}
A~linear transformation from a~constraint system $CS(\Gamma_1, \mathbb{W}_1)$
 to another constraint system $CS(\Gamma_2, \mathbb{W}_2)$ is a~polynomial-time procedure that given a~formula~$\Phi_1$ of~$CS(\Gamma_1, \mathbb{W}_1)$ over $n_1$~variables and integer $t_1$, returns a~formula~$\Phi_2 \in CS(\Gamma_2, \mathbb{W}_2)$ over $n_2$ variables and integer $t_2$, so that the~following conditions hold:
\begin{enumerate}
\itemsep0em 
    \item $n_2 = \Oh(n_1)$,
    \item $|\Phi_2| = \Oh(|\Phi_1| + n_1)$,
    \item $||\Phi_2|| \le ||\Phi_1|| \cdot n_1^{\Oh(1)}$, 
    \item $\exists_x \Phi_1(x) = t_1 \Longleftrightarrow \exists_y \Phi_2(y) = t_2$,
    \item $\exists_x \Phi_1(x) \ge t_1 \Longleftrightarrow \exists_y \Phi_2(y) \ge t_2$.
\end{enumerate}

If there exists a~linear transformation from $CS(\Gamma_1, \mathbb{W}_1)$ to $CS(\Gamma_2, \mathbb{W}_2)$, we~write concisely
$CS(\Gamma_1, \mathbb{W}_1) \le_{LIN} CS(\Gamma_2, \mathbb{W}_2)$.

If the~condition (1) can be replaced with a~stronger one: $n_2 = n_1 + \Oh(1)$, we~call the~transformation additive and write $CS(\Gamma_1, \mathbb{W}_1) \le_{ADD} CS(\Gamma_2, \mathbb{W}_2)$.

\end{definition}

\mic{Before moving forward, let us explain the~importance of~linear and additive transformations.
We formulate two claims, which
follow from the~properties in Definition~\ref{def:transform}.}

\begin{proposition}\label{pro:compression}
If  $CS(\Gamma_1, \mathbb{W}_1) \le_{LIN} CS(\Gamma_2, \mathbb{W}_2)$ and \textsc{Max CSP}$(\Gamma_2, \mathbb{W}_2, c)$ admits a~compression of~size $\Oh(n^d)$, then \textsc{Max CSP}$(\Gamma_1, \mathbb{W}_1, c - \Oh(1))$ also admits a~compression of~size $\Oh(n^d)$.
\end{proposition}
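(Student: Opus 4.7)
The plan is to compose the linear transformation with the assumed compression. Given an instance $(\Phi_1, t_1)$ of \textsc{Max CSP}$(\Gamma_1, \mathbb{W}_1, c')$ on $n_1$ variables, where $c' = c - c_0$ and $c_0$ is a constant to be fixed below, I would first apply the linear transformation to obtain an instance $(\Phi_2, t_2)$ of $CS(\Gamma_2, \mathbb{W}_2)$ in polynomial time. By property~(1) of Definition~\ref{def:transform}, $n_2 \le A \cdot n_1$ for some constant $A$, and by property~(3), $\|\Phi_2\| \le \|\Phi_1\| \cdot n_1^{B} \le n_1^{c' + B}$ for some constant $B$.

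The main subtlety is that feeding $(\Phi_2, t_2)$ into the assumed compression for \textsc{Max CSP}$(\Gamma_2, \mathbb{W}_2, c)$ requires the input to satisfy $\|\Phi_2\| \le n_2^c$. To make the arithmetic clean, I would pad $\Phi_2$ by introducing $\max(0, n_1 - n_2)$ dummy variables that do not appear in any constraint application; this preserves the satisfying assignments, and changes the variable count to some $\widetilde n_2 \in [n_1,\, A\cdot n_1]$, so in particular $\widetilde n_2 \ge n_1$. Now $\|\Phi_2\| \le n_1^{c' + B} \le \widetilde n_2^{\,c' + B}$, so fixing $c_0 := B$ (hence $c' = c - B$) guarantees $\|\Phi_2\| \le \widetilde n_2^{\,c}$, i.e.\ the padded instance is a legitimate input of \textsc{Max CSP}$(\Gamma_2, \mathbb{W}_2, c)$.

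Applying the assumed compression yields an output of size $\Oh(\widetilde n_2^{\,d}) = \Oh(n_1^d)$, computed in polynomial time. By property~(5) of Definition~\ref{def:transform}, $(\Phi_1,t_1)$ is a yes-instance of \textsc{Max CSP}$(\Gamma_1, \mathbb{W}_1, c')$ if and only if $(\Phi_2, t_2)$ is a yes-instance of \textsc{Max CSP}$(\Gamma_2, \mathbb{W}_2, c)$, and the latter is equivalent to the compressed output being a yes-instance of the target decision problem $B$. Composing these equivalences gives a compression of \textsc{Max CSP}$(\Gamma_1, \mathbb{W}_1, c - \Oh(1))$ of size $\Oh(n^d)$, as required.

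There is no deep obstacle here: the statement is essentially an exercise in bookkeeping. The only mildly delicate point is matching the weight-bound convention of the target problem, and that is precisely what the $c - \Oh(1)$ slack in the parameter is designed to absorb. The padding step is only needed because the linear transformation is allowed to \emph{decrease} the variable count, which could in principle make $n_2^c$ too small to accommodate the polynomial blow-up of weights promised by property~(3).
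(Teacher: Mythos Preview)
Your proof is correct and follows the same (only natural) approach as the paper, which in fact leaves this proposition unproved, remarking only that it ``follows from the properties in Definition~\ref{def:transform}.'' Your explicit padding step to handle the possibility $n_2 < n_1$ is a genuine detail the paper glosses over; without it the weight bound $\|\Phi_2\|\le n_2^c$ could fail, so it is good that you addressed it.
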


\begin{proposition}\label{pro:exp}
If $CS(\Gamma_1, \mathbb{W}_1) \le_{ADD} CS(\Gamma_2, \mathbb{W}_2)$ and \textsc{Max CSP}$(\Gamma_2, \mathbb{W}_2, c)$ admits an~algorithm with running time $T(n)$, then \textsc{Max CSP}$(\Gamma_1, \mathbb{W}_1, c - \Oh(1))$ admits an~algorithm with running time $T(n + \Oh(1))$.
\end{proposition}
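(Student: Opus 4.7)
The plan is to chain the given additive transformation with the assumed algorithm. On input an instance $(\Phi_1,t_1)$ of \textsc{Max CSP}$(\Gamma_1,\mathbb{W}_1,c-\Oh(1))$ on $n_1$ variables, I would first invoke the additive transformation to produce in polynomial time an equivalent instance $(\Phi_2,t_2)$ of $CS(\Gamma_2,\mathbb{W}_2)$ on $n_2=n_1+\Oh(1)$ variables, then run the assumed $T(\cdot)$-time algorithm on $(\Phi_2,t_2)$ and return its verdict. Correctness is then immediate from condition~(5) of Definition~\ref{def:transform}: $\exists_x\Phi_1(x)\ge t_1$ iff $\exists_y\Phi_2(y)\ge t_2$. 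The overall running time is the polynomial cost of the transformation plus $T(n_2)=T(n_1+\Oh(1))$, and the former is absorbed into the latter whenever $T$ is at least polynomial.

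The one subtlety is checking that $(\Phi_2,t_2)$ is actually a legal instance of \textsc{Max CSP}$(\Gamma_2,\mathbb{W}_2,c)$, i.e.\ that its weight bound $||\Phi_2||\le n_2^c$ is met so that the assumed algorithm is applicable. This is exactly where the $\Oh(1)$ slack in the parameter $c-\Oh(1)$ on the source side is consumed. By the input promise, $||\Phi_1||\le n_1^{c-c'}$ for some constant $c'$, and by condition~(3) of Definition~\ref{def:transform} we have $||\Phi_2||\le ||\Phi_1||\cdot n_1^{c''}$ for another constant $c''$ depending only on the transformation. Fixing the constant hidden in $c-\Oh(1)$ so that $c'\ge c''$ guarantees $||\Phi_2||\le n_1^{c}\le n_2^{c}$ for all sufficiently large $n_1$ (with at most finitely many small instances handled by brute force), which is precisely the promise required by the target problem.

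The hardest part, though genuinely mild, is this bookkeeping around the weight parameter: the polynomial blow-up in condition~(3) prevents a clean statement for every fixed $c$ on the $\Gamma_1$ side, and the propositon is most naturally phrased by sacrificing a constant additive amount in the exponent. Aside from that, the claim is essentially a direct unpacking of Definition~\ref{def:transform}, relying only on conditions~(1), (3), and (5); conditions~(2) and (4) play no role here but will be used in other reductions (e.g.\ Proposition~\ref{pro:compression}).
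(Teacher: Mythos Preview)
Your proposal is correct and matches the paper's approach: the paper does not give an explicit proof of this proposition, merely stating that it ``follows from the properties in Definition~\ref{def:transform}'', and your argument is precisely the natural unpacking of that definition. The bookkeeping you highlight around the weight parameter~$c$ is indeed the only point requiring any care, and you handle it correctly; your parenthetical remark about conditions~(2) and~(4) being used in Proposition~\ref{pro:compression} is slightly off (condition~(4) concerns \textsc{Exact CSP}), but this is tangential to the proof itself.
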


In particular, additive transformations preserve running \bmp{times} of~the~form $2^{(1-\eps)n}n^{\Oh(1)}$.

\begin{lemma}
Linear transformations (resp. Additive transformations) are transitive.
\end{lemma}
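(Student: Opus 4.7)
The plan is straightforward: given linear transformations $\mathcal{T}_{12} \colon CS(\Gamma_1,\mathbb{W}_1)\to CS(\Gamma_2,\mathbb{W}_2)$ and $\mathcal{T}_{23} \colon CS(\Gamma_2,\mathbb{W}_2)\to CS(\Gamma_3,\mathbb{W}_3)$, I would define the composite transformation $\mathcal{T}_{13}$ by first running $\mathcal{T}_{12}$ on the input $(\Phi_1,t_1)$ to obtain an intermediate instance $(\Phi_2,t_2)$, and then feeding it into $\mathcal{T}_{23}$ to obtain $(\Phi_3,t_3)$. Running two polynomial-time procedures in sequence is still polynomial time, so the composition is a valid procedure.

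The core of the proof is just checking the five conditions of Definition~\ref{def:transform}. For condition~(1), if $\mathcal{T}_{12}$ produces $n_2 \le c_1 n_1$ variables and $\mathcal{T}_{23}$ produces $n_3 \le c_2 n_2$ variables, then $n_3 \le c_1 c_2 \cdot n_1 = \Oh(n_1)$. For condition~(2), I would chain $|\Phi_3| = \Oh(|\Phi_2| + n_2) = \Oh(|\Phi_1| + n_1 + n_2) = \Oh(|\Phi_1| + n_1)$ using condition~(1) already established. For condition~(3), since $||\Phi_2|| \le ||\Phi_1|| \cdot n_1^{\Oh(1)}$ and $||\Phi_3|| \le ||\Phi_2|| \cdot n_2^{\Oh(1)}$, the product gives $||\Phi_3|| \le ||\Phi_1|| \cdot n_1^{\Oh(1)} \cdot n_2^{\Oh(1)} = ||\Phi_1|| \cdot n_1^{\Oh(1)}$ by condition~(1). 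Conditions~(4) and (5) follow by simply chaining the two biconditionals: $\exists_x \Phi_1(x) \ge t_1 \Longleftrightarrow \exists_y \Phi_2(y) \ge t_2 \Longleftrightarrow \exists_z \Phi_3(z) \ge t_3$, and analogously for equality.

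For the additive case, the only modification needed is in condition~(1): if $n_2 \le n_1 + c_1$ and $n_3 \le n_2 + c_2$, then $n_3 \le n_1 + (c_1 + c_2) = n_1 + \Oh(1)$. The other conditions carry over verbatim, since the bounds on $|\Phi_i|$ and $||\Phi_i||$ from Definition~\ref{def:transform} are identical in both the linear and additive variants.

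There is no real obstacle here; the statement is essentially a bookkeeping exercise once the composition is written down. The only minor point worth remarking on in the write-up is that the polynomial slack in condition~(3) is closed under composition because a polynomial in a polynomial is again a polynomial, and similarly the sum in condition~(2) collapses via condition~(1).
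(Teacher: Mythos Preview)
Your proof is correct and follows essentially the same approach as the paper: compose the two transformations and verify each of the five conditions by chaining the corresponding bounds, handling the additive case by the same argument with $n_1 + \Oh(1)$ in place of $\Oh(n_1)$. Your write-up is slightly more explicit (you spell out the polynomial-time composition and the intermediate constants), but the substance is identical.
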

\begin{proof}
Suppose $CS(\Gamma_1, \mathbb{W}_1) \le_{LIN} CS(\Gamma_2, \mathbb{W}_2) \le_{LIN} CS(\Gamma_3, \mathbb{W}_3)$ and we~want to show that  $CS(\Gamma_1, \mathbb{W}_1) \le_{LIN} CS(\Gamma_3, \mathbb{W}_3)$. 
Since $n_2 = \Oh(n_1)$ and $n_3 = \Oh(n_2)$, then $n_3 = \Oh(n_1)$.
To ensure properties (2, 3) are preserved, note that
$|\Phi_3| = \Oh(|\Phi_2| + n_2) = \Oh(|\Phi_1| + n_1)$ and
$||\Phi_3|| \le ||\Phi_1|| \cdot n_1^{\Oh(1)} \cdot n_2^{\Oh(1)} = ||\Phi_1|| \cdot n_1^{\Oh(1)}$.
The properties (4, 5) are equivalences that are transitive.
Moreover, if we~replace relation $\le_{LIN}$ with $\le_{ADD}$, then we~have that $n_3 = n_1 + \Oh(1)$ and the~other properties follow since an additive transformation is always a~linear one.
\end{proof}

We continue with two simple additive transformations, which will allow us to use negations of~constraints as an alternative to setting negative weights.\bmpr{`as an alternative to setting negative weights'?}

\begin{lemma}\label{lem:signed-reductions}
For every constraint language $\Gamma$ we~have
\begin{enumerate}
\itemsep0em
\item $\cs(\gneg, \zz) \le_{ADD} \cs(\Gamma, \zz)$,
    \item $\cs(\gneg, \zz) \le_{ADD} \cs(\gneg, \nn)$.
     
\end{enumerate}
\end{lemma}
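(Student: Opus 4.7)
The plan is to exploit the single identity $\neg f(x) = 1 - f(x)$, which lets us swap a constraint for its negation at the cost of shifting the objective value by an assignment-independent constant; that offset is then absorbed into the target threshold. Since neither replacement touches the variable set, both transformations will in fact satisfy $n_2 = n_1$, which is the strongest form of condition~(1).

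For Part~1, we would process each application $\langle c, \mathbf{j}, w\rangle$ of $\Phi_1 \in \cs(\gneg, \zz)$. If $c$ already lies in $\Gamma$, copy the application verbatim; if instead $c = \neg f$ for some $f \in \Gamma$, emit $\langle f, \mathbf{j}, -w\rangle$, which is legal because weights may be negative. On any assignment the contribution of a replaced application changes from $w \cdot (1 - f(x))$ to $-w \cdot f(x)$, a net shift of $-w$. Letting $C$ denote the sum of the weights $w$ over all replaced applications, we then have $\Phi_2(x) = \Phi_1(x) - C$ for every $x$, and we set $t_2 = t_1 - C$.

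For Part~2, the target constraint language is still $\gneg$ but weights must be positive, so the plan is to keep every application with $w > 0$ unchanged, to replace each application $\langle c, \mathbf{j}, w\rangle$ with $w < 0$ by $\langle \neg c, \mathbf{j}, -w\rangle$ (which lies in $\gneg$ since $(\gneg)^{NEG} = \gneg$), and to discard applications with $w = 0$ as they are inert. The same algebra yields an additive shift equal to $C'$, the sum of $-w$ over the replaced applications, and we put $t_2 = t_1 + C'$.

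Verifying the five conditions of Definition~\ref{def:transform} is then routine: no new variables are introduced, so $n_2 = n_1$, which is additive; the number of constraint applications does not grow; the sum of absolute values of the weights is non-increasing, so $||\Phi_2|| \le ||\Phi_1||$; and since $\Phi_2$ differs from $\Phi_1$ by an integer constant everywhere, the threshold shift makes the equivalences in conditions~(4) and~(5) transfer directly. I do not expect any genuine obstacle here --- the lemma is essentially bookkeeping, but it is the glue that will allow the subsequent reductions to be designed in the more flexible signed-weight regime $\cs(\gneg, \zz)$ and then pulled back to the natural-number regime in which the dichotomy theorem and the main compression results are phrased.
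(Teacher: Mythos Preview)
Your proposal is correct and follows essentially the same approach as the paper: both parts use the identity $\neg f(x)=1-f(x)$ to swap a negated constraint for its original with a flipped weight sign (Part~1) or a negative weight for a positive one with the negated constraint (Part~2), absorbing the resulting constant shift into the threshold. Your write-up is slightly more explicit about the bookkeeping (tracking the shift constants $C,C'$, noting $n_2=n_1$ and $\|\Phi_2\|\le\|\Phi_1\|$, and discarding zero-weight applications), but the underlying argument is identical to the paper's.
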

\begin{proof}
Suppose we~are given $\Phi_1 \in \cs(\gneg, \zz)$ and integer $t$.
Observe that $(\neg f)(x) = 1 - f(x)$.
\mic{
For each negated constraint~$\neg f$ we~can thus remove each constraint application $\langle \neg f, (i_1, i_2, \dots i_k), w \rangle$, replace it with $\langle f, (i_1, i_2, \dots i_k), -w \rangle$, and subtract~$w$ from the~target weight~$t$.
We iterate this operation until all applied constraints belong to $\Gamma$.
We obtain
a~new formula~$\Phi_2$ of~\cs($\Gamma, \zz$) over the~same set of~variables and a~new threshold $t'$, so that $\Phi_2(x) = t' \Longleftrightarrow \Phi_1(x) = t$ (and the~same holds with '$=$' replaced by '$\ge$').}
This proves (1).

\bmp{For claim (2)}, consider a~formula~ $\Phi_1$ of~\cs($\gneg, \zz$).
As before, we~have $ f(x) = 1 - (\neg f)(x)$ and we~can replace each constraint application of~$f$ having a~negative weight by $(\neg f)(x)$ with the~opposite weight, and subtract the~weight from $t$.
By erasing all negative weights, we~obtain a~formula~$\Phi_2$ of~\cs($\gneg, \nn$)
over the~same set of~variables and new threshold $t'$, so that the~new instance is equivalent.
\end{proof}

The rest of~this section contains four lemmas
that form a~chain of~reductions from $\cs(\Gamma_1, \zz)$ to $\cs(\Gamma_2, \nn)$\bmp{, which is} valid as long as $\deg(\Gamma_1) \le \deg(\Gamma_2)$ and \cspnn{2} is NP-hard.
The proofs are based on the~frameworks of~characteristic polynomials and constraint implementations.

\begin{lemma}\label{lem:apply-poly}
Let $\Gamma_1, \Gamma_2$ be {non-trivial}
constraint languages satisfying
$\deg(\Gamma_1) \le \deg(\Gamma_2)$. Then \\ $\cs(\Gamma_1, \zz) \le_{ADD} \cs(\gtf_2, \zz)$.
\end{lemma}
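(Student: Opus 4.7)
The plan is to apply Proposition~\ref{pro:identity} one constraint at a time inside the input formula, rewriting each $\Gamma_1$-constraint as a rational linear combination of constraints from $\gtf_2$ on the same variables, and then clearing denominators to stay in $\zz$. First I would fix any $f \in \Gamma_2$ with $\deg(f) = \deg(\Gamma_2)$; this $f$ is non-trivial because $\Gamma_2$ is. For each of the finitely many $g \in \Gamma_1$, the hypothesis $\deg(g) \le \deg(\Gamma_1) \le \deg(\Gamma_2) = \deg(f)$ together with non-triviality of $f$ lets Proposition~\ref{pro:identity} supply a constant-length list of rationals $\alpha_i^{(g)}$, constraints $f_i^{(g)} \in \{f\}^{T,F} \subseteq \gtf_2$, and index maps $j_i^\bullet(g)$ so that
\[
g(x_1, \ldots, x_{\ar(g)}) \;=\; \sum_{i} \alpha_i^{(g)} \cdot f_i^{(g)}\bigl(x_{j_i^1(g)}, \ldots, x_{j_i^{\ar(f_i^{(g)})}(g)}\bigr)
\]
as Boolean functions. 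Let $D$ be a positive-integer common denominator of all $\alpha_i^{(g)}$; since $\Gamma_1$ is finite, $D$ is an absolute constant depending only on $\Gamma_1, \Gamma_2$.

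Given $(\Phi_1, t_1)$, I would then build $\Phi_2$ over the \emph{same} $n_1$ variables: each application $\langle g, (\ell_1, \ldots, \ell_{\ar(g)}), w\rangle$ of $\Phi_1$ is replaced by the constant-size family of applications $\langle f_i^{(g)}, (\ell_{j_i^1(g)}, \ldots, \ell_{j_i^{\ar(f_i^{(g)})}(g)}), D \alpha_i^{(g)} w\rangle$, one per $i$; the choice of $D$ forces all new weights into $\zz$. Setting $t_2 := D t_1$ and summing the identities across the weighted applications of $\Phi_1$ yields $\Phi_2(x) = D \cdot \Phi_1(x)$ for every assignment $x$, so (since $D > 0$) both $\Phi_1(x) = t_1 \Longleftrightarrow \Phi_2(x) = t_2$ and $\Phi_1(x) \ge t_1 \Longleftrightarrow \Phi_2(x) \ge t_2$, delivering items (4) and (5) of Definition~\ref{def:transform}.

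The remaining size bounds are routine: $n_2 = n_1$ (stronger than additive), $|\Phi_2| \le M\, |\Phi_1|$ where $M$ is the maximum length of the identity over $g \in \Gamma_1$, and $||\Phi_2|| \le D \cdot \max_{g,i} |\alpha_i^{(g)}| \cdot ||\Phi_1|| = \Oh(||\Phi_1||)$, comfortably within the polynomial slack permitted by condition (3). I do not anticipate any real obstacle beyond citing Proposition~\ref{pro:identity}; the only minor subtlety worth noting is that some $f_i^{(g)}$ may have arity $0$ (a trivial constant arising when Lemma~\ref{lem:express-g} sets every input of $f$ to a constant), but such trivial constraints are explicitly allowed in $\gtf_2$ and contribute a fixed weight per original application that the polynomial identity already accounts for automatically.
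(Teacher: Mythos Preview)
Your proposal is correct and follows essentially the same approach as the paper: pick a maximum-degree $f\in\Gamma_2$, invoke Proposition~\ref{pro:identity} for each $g\in\Gamma_1$, clear denominators with a single constant $D$ (the paper calls it $\beta$), and conclude $\Phi_2(x)=D\cdot\Phi_1(x)$ on the same variable set. You are in fact slightly more explicit than the paper about why the chosen $f$ is non-trivial and about the arity-$0$ corner case, but the argument is the same.
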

\begin{proof}
Let $f \in \Gamma_2$ be a~constraint of~degree $\deg(\Gamma_2)$.
By Proposition~\ref{pro:identity} we~can represent each $g \in \Gamma_1$ as a~linear combination of~constraint applications $\langle f_i, (j_i^1, \dots, j_i^{\ar(f_i)}), \alpha_i\rangle$, where 
$\alpha_i \in \mathbb{Q}$,
$(j_1, \dots, j_{\ar(f_i)}) \in \binom{[\ar(g)]}{\ar(f_i)}$, and
functions $f_i$ are expressible by $f$ with constants.

Since $\Gamma_1$ is finite, there is a~common denominator of~all rational numbers occurring in these linear combinations. 
Therefore, we~can find a~single positive integer $\beta$, so that for each $g \in \Gamma_1$, the~function $\beta\cdot g$ can be represented as a~combination of~constraints from $\{f\}^{T,F}$ with integer coefficients. 



Given a~formula~$\Phi_1 \in \cs(\Gamma_1, \zz)$ we~replace each constraint application $\langle g, (j_1, \dots, j_k), w \rangle$ with a~set of~constraints applications from $\{f\}^{T,F}$ with these integer weights multiplied by~$w$.
We obtain a~formula~$\Phi_2 \in \cs(\gtf_2, \zz)$ over the~same set of~variables, such that $\Phi_2(x) = \beta~\cdot \Phi_1(x)$.
This implies the~properties (4, 5) from Definition~\ref{def:transform}.
We have not increased the~number of~variables, so the~transformation is additive, and $|\Phi_1|, ||\Phi_1||$ have been increased by a~constant factor depending only on~$\Gamma_1$ and~$\Gamma_2$.
\end{proof}

\mic{Consider two formulas $\Phi_1, \Phi_2$ over sets of~variables $V_1, V_2$, respectively, which might have a~non-empty intersection.
We define the~sum of~these formulas, $\Phi_1 + \Phi_2$, over the~set of~variables $V_1 \cup V_2$ by taking a~union of~their sets of~constraint applications and merging pairs of~applications that share the~same constraint and the~same tuple of~variables, i.e., replacing the~pair with a~single application with a~weight being the~sum of~the~respective weights.
For an integer $\alpha$, the~formula~$\alpha\cdot\Phi$ has the~same constraint applications as $\Phi$, but with weights multiplied by $\alpha$.}\bmpr{Nice!}

\begin{lemma}\label{lem:implement-tf}
Let $\Gamma$ be
a~non-trivial constraint language, which is neither 0-valid nor 1-valid.
Then \\ $\cs(\gtf, \zz) \le_{ADD} \cs(\Gamma, \zz)$.
\end{lemma}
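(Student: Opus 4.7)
The plan is to enforce the semantics of the constants $T$ and $F$ by introducing two fresh global variables $x_T$ and $x_F$ and rewriting each constraint application from $\gtf$ into an application of the parent constraint from $\Gamma$, replacing every constant $1$ by $x_T$ and every constant $0$ by $x_F$. Because only two fresh variables are introduced, regardless of $n_1$, the additivity condition $n_2 = n_1 + \Oh(1)$ is immediate, as are the bounds on $|\Phi_2|$ and $||\Phi_2||$ once the penalty weights used below are chosen polynomial in $||\Phi_1||$.

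The core work is to make $x_T = 1$ and $x_F = 0$ happen in every optimal assignment. I would split into two cases depending on whether $\Gamma$ is C-closed. If $\Gamma$ is not C-closed, then by Lemma~\ref{lem:not-c-closed} there exist constraints $f_0, f_1, g \in \Gamma$ (witnessing failure of 0-validity, 1-validity, and C-closedness respectively) such that $\{f_0, f_1, g\} \stackrel{s}{\Longrightarrow} \{T, F\}$. I would attach strict implementations of $T(x_T)$ and $F(x_F)$ to the rewritten formula, multiplying each gadget constraint application by a common weight $W = \Theta(||\Phi_1||)$ large enough that any assignment violating $x_T = 1$ or $x_F = 0$ loses at least $W$ compared to a valid one, which exceeds the total absolute weight of all rewritten applications.

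If $\Gamma$ is C-closed then no strict implementation of $\{T, F\}$ can exist, because the whole rewritten formula inherits the symmetry $\Phi_2(y) = \Phi_2(\bar y)$: complementing all variables (including $x_T, x_F$) complements every argument of every $\Gamma$-constraint and so preserves its value. Instead I would force only $x_T \neq x_F$. Lemma~\ref{lem:c-closed} gives $\Gamma \stackrel{s}{\Longrightarrow} \XOR$, so a single $\XOR(x_T, x_F)$ gadget with weight $W = \Theta(||\Phi_1||)$ pins $x_T \neq x_F$ in every optimum. Among the two complementary assignments consistent with this, both achieve the same value of $\Phi_2$ by the symmetry above, so one can assume $x_T = 1, x_F = 0$ without loss of generality, and then the rewritten applications evaluate exactly as the original ones.

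Finally, the new threshold is set to $t_2 = t_1 + \alpha W$ where $\alpha W$ is the maximum gadget contribution, after which conditions (4) and (5) of Definition~\ref{def:transform} are verified by observing that extending any witness for $\Phi_1$ by $x_T = 1, x_F = 0$ together with optimal auxiliary values yields a matching witness for $\Phi_2$, and conversely the large choice of $W$ forces every $\Phi_2$-witness to realize the gadgets' maxima, whence (after complementing in the C-closed case, if needed) it restricts to a $\Phi_1$-witness. The main obstacle I anticipate is the C-closed case: one cannot directly implement $\{T, F\}$ but must exploit the complementation symmetry to absorb the ambiguity between the two admissible assignments of $(x_T, x_F)$.
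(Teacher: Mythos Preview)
Your proposal is correct and follows essentially the same approach as the paper's proof: rewrite constants as fresh variables $x_T, x_F$, then in the non-C-closed case use the $\{T,F\}$-implementation from Lemma~\ref{lem:not-c-closed} with a large weight $W = 2||\Phi||+1$, and in the C-closed case use the $\XOR$-implementation from Lemma~\ref{lem:c-closed} together with the complementation symmetry $\Phi_2(y) = \Phi_2(\bar y)$. One small inaccuracy: the implementation gadgets themselves may introduce further auxiliary variables beyond $x_T, x_F$, so more than two fresh variables can appear, but their number depends only on $\Gamma$ and is therefore $\Oh(1)$, so your additivity conclusion is unaffected (the paper makes this count explicit).
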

\begin{proof}
Given a~formula~$\Phi \in \cs(\gtf, \zz)$, we~add two auxiliary variables $x_T, x_F$ and we~translate each constraint application $\langle \hat f, (j_1, \dots, j_k), w \rangle$, where $\hat f$ is expressible by $f \in \Gamma$ with constants, into an application of~$f$ by replacing constants 0, 1 with variables $x_T, x_F$.
\mic{Let us refer to this formula~as~$\Phi_1 \in \cs(\Gamma, \zz)$ and note that $|\Phi_1| = |\Phi|$ and $||\Phi_1|| = ||\Phi||$.}

In the~next step, we~will use implementations to impose particular conditions on $x_T, x_F$.
We refer to $x_T, x_F$ and all the~new variables introduced within the~implementations as auxiliary.
Let $a~= \Oh(1)$ denote their number.
In the~new formula~we~assume that the~first $n$ variables $x_1, \dots, x_n$ are the~primary variables and $x_{n+1}, \dots, x_{n+A}$ are auxiliary.
For $x \in \{0, 1\}^{n+A}$ let $x|_n$ stand for the~projection on the~first $n$ coordinates.

Assume first that $\Gamma$ is not C-closed.
Then it contains non-trivial functions $f_0$, $f_1$, and $g$, possibly identical, which are not 0-valid,
not 1-valid, and not C-closed, respectively.
By Lemma~\ref{lem:not-c-closed} we~know that $\Gamma$
$\alpha_1$-implements function $T$ and $\alpha_2$-implements function $F$ for some integers $\alpha_1, \alpha_2$.
Let $\alpha~= \alpha_1 + \alpha_2$.

\mic{We implement constraint applications $T(x_T)$ and $F(x_F)$,
that is, we~construct formulas $\Phi_T, \Phi_F \in \cs(\Gamma, \nn)$ over the~set of~auxiliary variables, such that satisfying $\alpha$ constraint applications in  $\Phi_T + \Phi_F$ is only possible when $x_T = 1$ and $x_F = 0$.
Let $W = 2 \cdot ||\Phi|| + 1$.
We define $\Phi_2 = \Phi_1 + W\cdot\Phi_T + W\cdot\Phi_F$,
that is, we~copy all the~constraint applications from $\Phi_1$ and add the~applications from $\Phi_T + \Phi_F$ with weights multiplied by $W$.}
Recall that we~have $(-||\Phi||) \le \Phi(x) \le ||\Phi||$ for all~$x$. 
By the~definition of~implementation, any assignment to $\Phi_2$ which does not satisfy $x_T = 1$ or $x_F = 0$ has value at most $(\alpha~- 1) \cdot W + ||\Phi|| < \alpha~W - ||\Phi||$.
If the~assignment $x$ satisfies $x_T = 1,\, x_F = 0$, it holds that $\Phi_2(x) = \alpha~W + \Phi(x|_n) \ge \alpha~W - ||\Phi||$.

Now, if $\Gamma$ is C-closed, it contains  a~ non-trivial  constraint  which  is  C-closed  and  not  0-valid.
Due~to~Lemma~\ref{lem:c-closed}, $\Gamma$
$\alpha$-implements function $\XOR$ for some constant $\alpha$.
\mic{We implement $\XOR(x_T, x_F)$,
that is, we~construct formula~$\Phi_{\XOR} \in \cs(\Gamma, \nn)$ over the~set of~auxiliary variables, such that satisfying $\alpha$ constraint applications in $\Phi_{\XOR}$ is only possible when $\XOR(x_T, x_F) = 1$.
As before, we~define $\Phi_2 = \Phi_1 + W\cdot\Phi_{\XOR}$, where
$W = 2 \cdot ||\Phi|| + 1$. }
Any assignment to $\Phi_2$ which does not satisfy $\XOR(x_T, x_F)$ has value at most $(\alpha~- 1) \cdot W + ||\Phi|| < \alpha~W - ||\Phi||$.
For an assignment $x$ satisfying $x_T = 1,\, x_F = 0$, it holds that $\Phi_2(x) = \alpha~W + \Phi(x|_n) \ge \alpha~W - ||\Phi||$.
It might also be the~case that $x_T = 0,\, x_F = 1$\bmp{. Then, }by C-closedness we~have $\Phi_2(x) = \Phi_2(\bar{x}) = \alpha~W + \Phi(\bar{x}|_n) \ge \alpha~W - ||\Phi||$, where $\bar{x}$ is the~bit-wise complement of~$x$.

We summarize the~transformation properties for both considered cases: the~new number of~variables is $n + \Oh(1)$, $|\Phi_2| = |\Phi| + \Oh(1)$, and $||\Phi_2|| = ||\Phi|| \cdot \Oh(1)$.
\mic{If $t < - ||\Phi||$, then we~know that all assignments $x$ satisfy $\Phi(x) > t$ and in such case we~could return an~empty formula~over a~singleton variable set (so the~only possible value is 0) and set threshold $t' = -1$: this is an~equivalent instance.}
To see properties (4, 5) observe that, assuming $t \ge - ||\Phi||$, $\Phi(x) = t$ (resp. $\Phi(x) \ge t$) holds for some assignment $x$ iff.~$\Phi_2(y) = t'$ (resp. $\Phi_2(y) \ge t'$) holds for some assignment $y$, where $t' = \alpha~W + t$.
\end{proof}

\begin{corollary}\label{cor:additive}
Let $\Gamma_1, \Gamma_2$ be {non-trivial} constraint languages such that
that $\deg(\Gamma_1) \le \deg(\Gamma_2)$ and $\Gamma_2$ is neither 0-valid nor 1-valid.
Then $\cs(\Gamma_1, \zz) \le_{ADD} \cs(\Gamma_2, \zz)$.
\end{corollary}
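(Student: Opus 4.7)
The plan is to obtain the result by chaining the two preceding lemmas and invoking transitivity of additive transformations. Since $\Gamma_1$ is non-trivial with $\deg(\Gamma_1) \le \deg(\Gamma_2)$, Lemma~\ref{lem:apply-poly} immediately yields
$$\cs(\Gamma_1, \zz) \le_{ADD} \cs(\gtf_2, \zz).$$
Since $\Gamma_2$ is non-trivial and neither 0-valid nor 1-valid, Lemma~\ref{lem:implement-tf} yields
$$\cs(\gtf_2, \zz) \le_{ADD} \cs(\Gamma_2, \zz).$$

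Then I would appeal to the~transitivity lemma for $\le_{ADD}$ proved earlier in this section, concatenating the~two transformations into a~single additive transformation from $\cs(\Gamma_1, \zz)$ to $\cs(\Gamma_2, \zz)$. No new construction is needed, and I do not anticipate any real obstacle, since the~hypotheses of~both lemmas are ensured by the~assumptions of~the~corollary: $\Gamma_1$ is non-trivial (so Lemma~\ref{lem:apply-poly} applies), and $\Gamma_2$ is non-trivial, not 0-valid, and not 1-valid (so Lemma~\ref{lem:implement-tf} applies to $\gtf_2$, whose closure equals itself and which inherits these properties from $\Gamma_2$).

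The only minor point worth checking is that the~intermediate language $\gtf_2$ indeed satisfies the~hypotheses of~Lemma~\ref{lem:implement-tf}: namely, non-triviality and the~absence of~0-validity and 1-validity. Non-triviality of~$\gtf_2$ follows from non-triviality of~$\Gamma_2 \subseteq \gtf_2$. Moreover, since $\Gamma_2 \subseteq \gtf_2$ and $\gtf_2$ is not 0-valid (not 1-valid) whenever $\Gamma_2$ is not, the~closure also fails both validity conditions. Hence the~assumptions of~Lemma~\ref{lem:implement-tf} are met. The whole argument is thus a~one-line composition, and the~proof should take only a~couple of~sentences.
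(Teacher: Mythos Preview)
Your approach is correct and matches the paper's intended derivation: compose Lemma~\ref{lem:apply-poly} with Lemma~\ref{lem:implement-tf} (applied to $\Gamma_2$) via transitivity of $\le_{ADD}$. Your final paragraph is unnecessary, though: Lemma~\ref{lem:implement-tf} is invoked with $\Gamma = \Gamma_2$, not with $\Gamma = \gtf_2$, so the hypotheses to verify are those on $\Gamma_2$ itself, and these are given directly by assumption.
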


So far we~have established a~relation between $\Gamma_1$ and $\Gamma_2$, which allows us to transform one constraint system to another by adding only a~constant number of~new variables.
However, it~works only when we~allow negative weights.
The next two lemmas explain how to get rid of~negative weights, so that the~hardness results can be applied to the~natural setting with only non-negative weights.

\begin{lemma}\label{lem:unsigned-lit}
For every non-trivial constraint language $\Gamma$ it holds that 
$\cs(\Gamma, \zz) \le_{ADD} \cs(\glit, \nn)$.
\end{lemma}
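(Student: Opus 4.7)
The strategy is to eliminate every negative weight by rewriting each offending constraint application as a sum of \emph{positive}-weight applications that live in $\glit$, at the cost of a constant shift in the threshold. The key algebraic identity is the following: for any $k$-ary constraint $f$ and any Boolean assignment $a \in \{0,1\}^k$, as $(\ell_1,\ldots,\ell_k)$ ranges over all $2^k$ choices of literal-substitutions (each $\ell_j$ being either identity or negation), the tuple $(\ell_1(a_1),\ldots,\ell_k(a_k))$ ranges over $\{0,1\}^k$ bijectively. Consequently
\[
\sum_{(\ell_1,\ldots,\ell_k) \in \{\text{id},\neg\}^k} f\bigl(\ell_1(a_1),\ldots,\ell_k(a_k)\bigr) \;=\; N_f \;:=\; |f^{-1}(1)|,
\]
which is a constant independent of $a$. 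Separating out the all-identity term yields
\[
f(a_1,\ldots,a_k) \;=\; N_f \;-\; \sum_{(\ell_1,\ldots,\ell_k)\neq(\text{id},\ldots,\text{id})} f\bigl(\ell_1(a_1),\ldots,\ell_k(a_k)\bigr).
\]

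Given a formula $\Phi_1 \in \cs(\Gamma,\zz)$ with target $t_1$, I construct $\Phi_2$ as follows. Keep every constraint application of non-negative weight unchanged (it is already in $\cs(\glit,\nn)$ since $\Gamma\subseteq\glit$). For each application $\langle f,(i_1,\ldots,i_k),-w\rangle$ with $w>0$, replace it by the $2^k-1$ applications $\langle f(\ell_1,\ldots,\ell_k),(i_1,\ldots,i_k),w\rangle$, one for each non-identity literal vector; each of these constraints is by definition a member of $\glit$, and the weight $w$ is positive. Accumulate the constant $C := \sum_j w_j\cdot N_{f_j}$, summed over all originally negative-weight applications, and set $t_2 := t_1 + C$. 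The displayed identity, applied pointwise, gives $\Phi_2(x) = \Phi_1(x) + C$ for every assignment $x$, so $\Phi_1(x)=t_1 \Leftrightarrow \Phi_2(x)=t_2$ and likewise with $\geq$, establishing properties (4) and (5) of Definition~\ref{def:transform}.

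It remains to verify the quantitative bounds. No new variables are introduced, so $n_2=n_1$ and the transformation is even additive (in fact identical on variables). Since $\Gamma$ is finite, $k \leq K := \max_{f\in\Gamma}\ar(f) = \Oh(1)$, so each replacement produces at most $2^K = \Oh(1)$ new applications, giving $|\Phi_2| = \Oh(|\Phi_1|)$. The total weight grows by the same $\Oh(1)$ factor, so $||\Phi_2|| = \Oh(||\Phi_1||) \leq ||\Phi_1||\cdot n_1^{\Oh(1)}$. Every step is trivially polynomial-time. The only mild subtlety is keeping track of the threshold shift $C$, which is handled uniformly by the bijection argument; the non-triviality hypothesis on $\Gamma$ is not actually needed for the construction itself but is inherited from the statement. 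I do not anticipate any real obstacle — the proof is essentially a one-line identity combined with bookkeeping.
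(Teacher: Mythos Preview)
Your proof is correct and rests on the same key identity as the paper's: for any fixed assignment, summing $f$ over all $2^k$ literal-substitutions gives the constant $|f^{-1}(1)|$. The only cosmetic difference is that the paper applies this identity globally---adding the constant-value formula $W\cdot\sum_{f}\Phi_f$ (with $W$ the largest negative weight) to shift every weight upward---whereas you apply it locally, rewriting each negative-weight application individually; both routes yield the same bounds and the same threshold shift.
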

\begin{proof}
For a~$k$-ary constraint $f \in \Gamma$ and $S \subseteq [k]$, let $f^S$ be the~function expressible by $f$ with literals given by negating variables with indices in $S$.
For example $\OR_3^{\{2\}}(x_1, x_2, x_3) = x_1 \lor \neg x_2 \lor x_3$.

Suppose we~are given $\Phi \in \cs(\Gamma, \zz)$ on $n$ variables.
Let $\mathcal{J}_f$ denote the~family of~all tuples $(j_1, \dots, j_k) \in [n]^k$ that appear in some constraint application of~$k$-ary constraint $f$ in $\Phi$.
For each $f \in \Gamma$, we~construct a~formula~$\Phi_f \in \cs(\glit, \nn)$ over the~same set of~variables as $\Phi$.
The formula~$\Phi_f$ consists of~all constraint applications of~the~form $\langle f^S, (j_1, \dots, j_k), 1 \rangle$ for all $S \subseteq [k]$ and $(j_1, \dots, j_k) \in \mathcal{J}_f$.

We claim that $\Phi_f$ is a~constant formula, i.e., $\Phi_f(x)$ does not depend on $x$.
For any assignment $x$ and $k$-tuple $(j_1, \dots, j_k)$, the~number of~sets $S$ for which $f^S(x_{j_1}, \dots, x_{j_k}) = 1$ equals exactly $|f| = |\{s \in \{0,1\}^k : f(s) = 1\}|$.
Therefore $\Phi_f(x) = |\mathcal{J}_f| \cdot |f|$ for any assignment $x$.

Let $W$ be the~absolute value of~the~minimum negative weight in $\Phi$ (if $\Phi$ has no negative weights, we~set $W = 0$).
\mic{We define $\Phi_1 = \Phi + W\cdot\sum_{f \in \Gamma} \Phi_f$.} 
Every constraint application from $\Phi$ appears in some $\Phi_f$, so after summing the~weights we~end up with only non-negative numbers.
Hence $\Phi_1 \in \cs(\glit, \nn)$ and $\Phi_1(x) = \Phi(x) +  W \cdot \sum_{f \in \Gamma}|\mathcal{J}_f| \cdot |f|$.
The second summand does not depend on $x$ and we~can set $t' = t +  W \cdot \sum_{f \in \Gamma}|\mathcal{J}_f| \cdot |f|$.
Since $ \sum_{f \in \Gamma}|\mathcal{J}_f| \cdot |f| = \Oh(|\Phi|)$,
we~have obtained an~additive transformation with $|\Phi_1| = \Oh(|\Phi|)$ and $||\Phi_1|| = \Oh(||\Phi||\cdot |\Phi|) = ||\Phi||\cdot n^{\Oh(1)}$.
\end{proof}

\begin{lemma}\label{lem:implement-lit}
Suppose non-trivial $\Gamma$ is neither 0-valid, 1-valid, nor 2-monotone.
Then $\cs(\glit, \nn) \le_{LIN} \cs(\Gamma, \nn)$.
\end{lemma}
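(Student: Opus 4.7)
The plan is to produce a linear transformation whose only job is to simulate negated literals. Since $\Gamma$ is neither 0-valid, 1-valid, nor 2-monotone, Lemma~\ref{lem:nphard-xor} gives a strict $\alpha$-implementation of $\XOR$ by constraints of $\Gamma$, using some constant number $q$ of auxiliary variables. I would fix such an implementation once and for all, and use fresh copies of it to \emph{manufacture} complemented variables.

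Concretely, given $\Phi \in \cs(\glit, \nn)$ on primary variables $x_1, \ldots, x_n$, I introduce $n$ new variables $\bar{x}_1, \ldots, \bar{x}_n$ and, for every $i \in [n]$, a private copy $\Phi_{\XOR,i} \in \cs(\Gamma, \nn)$ of the $\XOR$-implementation on the pair $(x_i, \bar{x}_i)$ with its own fresh $q$ auxiliary variables. Every constraint application $\langle f^S, (j_1,\dots,j_k), w\rangle$ of $\Phi$, where $f \in \Gamma$ and $f^S$ is $f$ with the positions $S$ negated, is rewritten into $\langle f, (j_1',\dots,j_k'), w\rangle$, in which $j_\ell' = j_\ell$ for $\ell \notin S$ and $j_\ell'$ points to $\bar{x}_{j_\ell}$ for $\ell \in S$. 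Call this rewritten formula $\Phi'$. Finally, set
\[
W \;=\; \lVert\Phi\rVert + 1, \qquad \Phi_2 \;=\; \Phi' \;+\; W \cdot \sum_{i=1}^{n} \Phi_{\XOR,i}, \qquad t' \;=\; t + W n \alpha.
\]

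The properties of Definition~\ref{def:transform} follow by direct accounting. The total number of variables is $n(2+q) = \Oh(n)$, giving condition~(1); each $\Phi_{\XOR,i}$ has $\Oh(1)$ applications, so $|\Phi_2| = \Oh(|\Phi| + n)$, giving (2); and $\lVert\Phi_2\rVert \le \Oh(\lVert\Phi\rVert) + n W \cdot \Oh(1) = \lVert\Phi\rVert \cdot n^{\Oh(1)}$, giving (3). For the equivalences (4) and (5), note that in any assignment all $n$ copies contribute at most $n\alpha$ to $\Phi_2$; by the implementation property, this bound is attained iff every copy is satisfied in its primary variables, i.e.\ iff $\bar{x}_i = \neg x_i$ for all $i$. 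If even one copy falls short, we lose at least $W$, whereas $\Phi'$ can contribute at most $\lVert\Phi\rVert < W$. Hence any assignment with $\Phi_2(y) \ge t'$ (or $=t'$) must satisfy all the $\XOR$ gadgets, in which case $\bar{x}_i$ is forced to $\neg x_i$, the translation of literals is faithful, and $\Phi_2(y) - Wn\alpha = \Phi(x)$ on the primary restriction $x = y|_n$. Conversely, any $x$ with $\Phi(x) \ge t$ (resp.\ $=t$) extends to an assignment $y$ with $\Phi_2(y) \ge t'$ (resp.\ $=t'$) by setting $\bar{x}_i = \neg x_i$ and choosing the auxiliary variables of each $\Phi_{\XOR,i}$ to witness the $\alpha$-implementation.

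The mild subtlety I expect to be the main obstacle is the bookkeeping around the threshold $t$: one has to pick $W$ large enough that no assignment violating some $\XOR$ gadget can compensate through $\Phi'$, and separately treat the trivial edge case $t \le 0$ (where the instance is a YES-instance and can be mapped to any fixed YES-instance of $\cs(\Gamma,\nn)$). With $W = \lVert\Phi\rVert + 1$ and $t \ge 0$, the inequality $\lVert\Phi\rVert + W(n\alpha - 1) < t + Wn\alpha$ reduces to $W > \lVert\Phi\rVert - t$, which is evidently satisfied, closing the argument.
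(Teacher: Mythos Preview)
Your proposal is correct and follows essentially the same approach as the paper: introduce complement variables $\bar{x}_i$, rewrite negated literals as these new variables, and enforce $\bar{x}_i = \neg x_i$ via heavily weighted copies of the $\XOR$-implementation supplied by Lemma~\ref{lem:nphard-xor}, with the same choice $W = \lVert\Phi\rVert + 1$ and threshold shift $t' = t + n\alpha W$. The only nit is that your edge-case remark ``$t \le 0$ \dots\ is a YES-instance'' is phrased for condition~(5) only; for condition~(4) one should instead map $t < 0$ to a fixed instance that is YES for $\ge$ and NO for $=$ (as the paper does by deferring to the argument in Lemma~\ref{lem:implement-tf}), but this is a trivial bookkeeping point.
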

\begin{proof}
We proceed similarly to Lemma~\ref{lem:implement-tf}.
Given a~formula~$\Phi \in \cs(\glit, \nn)$ on~$n$ variables,
for each variable $x_i$ we~introduce its negation-copy, that is, a~new variable $\hat{x_i}$, and translate each constraint application $\langle \hat f, (j_1, \dots, j_k), w \rangle$, where $\hat f$ is expressible by $f \in \Gamma$ with literals, into an application of~$f$ by replacing literals $x_i, \neg x_i$ with variables $x_i, \hat{x_i}$.
\mic{Let us refer to this formula~on $2n$ variables as~$\Phi_1 \in \cs(\Gamma, \nn)$ and note that $|\Phi_1| = |\Phi|$ and $||\Phi_1|| = ||\Phi||$.}

By Lemma~\ref{lem:nphard-xor} we~know that $\Gamma$
$\alpha$-implements $\XOR$ for same constant $\alpha$.
\mic{We implement $\XOR(x_i, \hat{x_i})$, that is, for each $i \in [n]$ we~construct a~formula~$\Phi_i \in \cs(\Gamma, \nn)$ over $x_i, \hat{x_i}$ and $\Oh(1)$ auxiliary variables (the~sets of~auxiliary variables are disjoint for different $i$), so that $\Phi_i$ can have value $\alpha$ only when $\XOR(x_i, \hat{x_i}) = 1$.
We define $\Phi_2 = \Phi_1 + W\cdot\sum_{i=1}^n \Phi_i$, where $W = ||\Phi|| + 1$.}
The total number of~variables is $\Oh(n)$ and we~have added $\Oh(n)$ new constraint applications.
As before, let $x|_n$ denote the~projection of~the~enlarged vector of~variables to the~original one.

Since the~weights are non-negative, we~have $0 \le \Phi_1(x) = \Phi(x|_n) \le ||\Phi||$ for all~$x$.
By the~definition of~implementation, any assignment to $\Phi_2$ which does not satisfy $\XOR(x_i, \hat{x_i})$ for some $i \in [n]$ has value at most $(n\alpha-1) \cdot W + ||\Phi|| < n\alpha~ W$.
Otherwise, we~have $\Phi_2(x) = n\alpha~W + \Phi(x|_n) \ge n\alpha~W$.

To summarize, the~new number of~variables is $\Oh(n)$, $|\Phi_2| = |\Phi| + \Oh(n)$, and $||\Phi_2|| = ||\Phi|| \cdot \Oh(n)$.
To see properties (4, 5) observe that we~can assume $t \ge 0$ \mic{(by the~same argument as in Lemma~\ref{lem:implement-tf})} and then $\Phi(x) = t$ (resp. $\Phi(x) \ge t$) holds for some assignment $x$ iff.~$\Phi_2(y) = t'$ (resp. $\Phi_2(y) \ge t'$) holds for some assignment $y$, where $t' = n \alpha~W + t$.
\end{proof}

In particular, all the~transformation above are linear, therefore \bmp{by transitivity} we~can summarize them as the~following corollary.

\begin{corollary}\label{cor:linear}
Let $\Gamma_1, \Gamma_2$ be {non-trivial} constraint languages such that
that $\deg(\Gamma_1) \le \deg(\Gamma_2)$ and $\Gamma_2$ is neither 0-valid, 1-valid, nor 2-monotone.
Then $\cs(\Gamma_1, \zz) \le_{LIN} \cs(\Gamma_2, \nn)$.
\end{corollary}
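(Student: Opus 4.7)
The plan is to chain the lemmas already proved in this section transitively. First, I would invoke Corollary~\ref{cor:additive} to obtain an additive transformation $\cs(\Gamma_1, \zz) \le_{ADD} \cs(\Gamma_2, \zz)$; this step needs exactly the hypotheses $\deg(\Gamma_1) \le \deg(\Gamma_2)$, non-triviality of both languages, and that $\Gamma_2$ is neither 0-valid nor 1-valid, all of which are provided by the assumptions. This handles the degree-matching part via Lemma~\ref{lem:apply-poly} (passing through $\gtf_2$) and then uses Lemma~\ref{lem:implement-tf} to eliminate the constants $T, F$ at the cost of only a constant number of auxiliary variables.

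Next, I would apply Lemma~\ref{lem:unsigned-lit} to $\Gamma_2$ (which is non-trivial by assumption), yielding the additive transformation $\cs(\Gamma_2, \zz) \le_{ADD} \cs(\glit_2, \nn)$; this eliminates negative weights at the cost of passing to the literal closure. Finally, I would invoke Lemma~\ref{lem:implement-lit}, whose hypotheses match ours exactly (non-triviality and failure of 0-valid, 1-valid, 2-monotone properties for $\Gamma_2$), to obtain the linear transformation $\cs(\glit_2, \nn) \le_{LIN} \cs(\Gamma_2, \nn)$; this is the only step that is merely linear rather than additive, since implementing $\XOR$ on each variable-negation pair doubles the variable count.

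Composing the three reductions using transitivity of $\le_{LIN}$ (already established in this section) and noting that every additive transformation is in particular linear, I conclude $\cs(\Gamma_1, \zz) \le_{LIN} \cs(\Gamma_2, \nn)$, which is exactly the claim. There is no real obstacle here — the proof is a straightforward one-line chaining of three previously established reductions, and all the technical work has already been carried out in Lemmas~\ref{lem:apply-poly}, \ref{lem:implement-tf}, \ref{lem:unsigned-lit}, and~\ref{lem:implement-lit}; the only thing to verify is that the hypotheses of each lemma are satisfied, which is immediate from the assumption that $\Gamma_2$ fails all three tractability conditions of the dichotomy theorem.
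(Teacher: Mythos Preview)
Your proposal is correct and follows exactly the same approach as the paper, which proves this corollary simply by noting that all the preceding transformations (Corollary~\ref{cor:additive}, Lemma~\ref{lem:unsigned-lit}, Lemma~\ref{lem:implement-lit}) are linear and composing them via transitivity. Your write-up in fact spells out the chain and the hypothesis-checking more carefully than the paper does.
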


\section{Consequences for compression}
\label{sec:compression}

Having an upper bound on $\deg(\Gamma)$ already provides compression for \bmp{\cspzzc{}}, since we~can represent all constraint applications as polynomials, sum the~coefficients at all $\Oh\big(n^{\deg(\Gamma)}\big)$ monomials, and store the~weights in $\Oh(\log n)$ bits.
However, we~would like to compress the~given instance into an equivalent instance of~the~same problem.

\begin{theorem}\label{thm:csp-ub}
\cspnnc{} parameterized by the~number of~variables~$n$ admits a~compression of~size $\Oh(n^d \log n)$ for all $c$, where $d = \deg (\Gamma)$. Furthermore, there is a~polynomial-time algorithm that reduces any instance of~\cspnnc{} to an equivalent instance of~\maxcspg{\Gamma, \mathbb{N}, c + \Oh(1)} of~size~$\Oh(n^d \log n)$. The analogous statements for \cspzzc{} also hold.
\end{theorem}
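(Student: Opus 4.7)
The plan is to expand the~formula as a~characteristic polynomial and then reinterpret the~expansion as a~CSP instance via~Corollary~\ref{cor:linear}. First I would handle the~tractable regime: if~$\Gamma$ is 0-valid, 1-valid, or 2-monotone, Theorem~\ref{thm:dichotomy} gives a~polynomial-time algorithm, so we~solve the~instance and output a~constant-size equivalent YES or NO instance built from any fixed non-trivial constraint of~$\Gamma$ (or an empty formula~if~$\Gamma$ is entirely trivial). The remainder of~the~argument concerns the~NP-hard regime.

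Given $\Phi \in \cspnnc{}$ on $n$ variables, rewrite every application $\langle f_i, (j_1^i, \dots, j_{k_i}^i), w_i\rangle$ via~its characteristic polynomial: $\Phi(x) = \sum_i w_i \cdot P_{f_i}(x_{j_1^i}, \dots, x_{j_{k_i}^i})$ on every $x \in \{0,1\}^n$. Expanding products and collecting monomials yields a~single multilinear polynomial $Q(x_1, \dots, x_n) = \sum_{S \subseteq [n],\, |S| \le d} c_S \prod_{i \in S} x_i$ of~degree at most~$d$ with $\Oh(n^d)$ terms. Since $\Gamma$ is finite, every coefficient of~each $P_{f_i}$ is bounded in absolute value by a~constant $C_\Gamma$ depending only on $\Gamma$, so $|c_S| \le C_\Gamma \cdot ||\Phi|| \le n^{c+\Oh(1)}$. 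Listing these coefficients together with the~threshold~$t$ already yields a~compression of~size $\Oh(n^d \log n)$, proving the~first claim.

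For the~kernelization to the~same problem, let $\Gamma_{\AND} = \{\AND_1, \dots, \AND_d\}$; since $P_{\AND_k}(x_1, \dots, x_k) = \prod_{i=1}^k x_i$, we~have $\deg(\Gamma_{\AND}) = d = \deg(\Gamma)$. Interpret $Q$ as a~formula $\Phi_{\AND} \in \cs(\Gamma_{\AND}, \zz)$ by creating, for each nonzero coefficient~$c_S$, a~single application of~$\AND_{|S|}$ on the~variables indexed by $S$ with weight $c_S$; by construction $\Phi_{\AND}(x) = \Phi(x)$ on $\{0,1\}^n$ and $||\Phi_{\AND}|| \le n^{c+\Oh(1)}$. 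Because $\Gamma$ is non-trivial and neither 0-valid, 1-valid, nor 2-monotone, Corollary~\ref{cor:linear} applies and transforms $\Phi_{\AND}$ into an equivalent formula $\Phi' \in \cs(\Gamma, \nn)$ on $\Oh(n)$ variables, with $\Oh(n^d)$ constraint applications and $||\Phi'|| \le n^{c+\Oh(1)}$. Each application needs only $\Oh(\log n)$ bits (constant-many variable indices plus a~polynomially bounded weight), so $\Phi'$ has total size $\Oh(n^d \log n)$ and lies in \maxcspg{\Gamma, \nn, c + \Oh(1)}. The~\cspzzc{} case runs identically: the~expansion above already produces integer coefficients, and the~output of~Corollary~\ref{cor:linear} has $\nn$-weights, which lie \emph{a~fortiori} in $\cs(\Gamma, \zz)$.

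The main technical point is keeping the~weight bound of~the~form $n^{c+\Oh(1)}$ throughout the~composition: the~polynomial expansion inflates coefficients only by the~$\Gamma$-dependent constant~$C_\Gamma$, while Corollary~\ref{cor:linear}---specifically through the~$\Oh(||\Phi||)$-scaled $\XOR$-gadget of~Lemma~\ref{lem:implement-lit}---contributes a~further $n^{\Oh(1)}$ factor. Composing these estimates yields the~required bound; everything else is a~direct invocation of~the~machinery developed in Sections~\ref{sec:polynomials} and~\ref{sec:reductions}.
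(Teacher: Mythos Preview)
Your proof is correct and follows essentially the same route as the paper: expand the formula into its characteristic polynomial, reinterpret the $\Oh(n^d)$ monomials as a $\cs(\Gamma_{\AND},\zz)$ instance, and push it back into $\cs(\Gamma,\nn)$ via Corollary~\ref{cor:linear}, with the tractable cases dispatched by Theorem~\ref{thm:dichotomy}. Your write-up is in fact a bit more explicit than the paper's about the coefficient bounds (the $C_\Gamma$ constant) and about why the weight blow-up stays within $n^{c+\Oh(1)}$; the only cosmetic loose end is the constant term $c_\emptyset$, which should be absorbed into the threshold rather than encoded as an $\AND_0$ application, but the paper glosses over this as well.
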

\begin{proof}
If $\Gamma$ is either trivial, 0-valid, 1-valid, or 2-monotone, then \cspnnc{} can be solved in polynomial time, so the~kernelization is trivial.
Suppose that it does not have any of~these properties.
We will prove both claims by compressing a~formula~$\Phi_1 \in \cs(\Gamma, \zz)$ on $n$ variables into a~formula~$\Phi_2 \in \cs(\Gamma, \nn)$ satisfying $|\Phi_2| = \Oh(n^d)$.

First we~interpret each constraint application in $\Phi_1$ as a~polynomial of~degree at most $d$.
After summing these terms, we~obtain a~polynomial $P$ with $\Oh(n^d)$ monomials, each associated with an integer weight of~absolute value bounded by $||\Phi_1||$.
A~monomial $\Pi_{i=1}^k x_i$ is the~characteristic polynomial for the~constraint $\text{AND}_k(x_1, \dots, x_k)$, therefore $P$ can be treated as a~formula~of~$\cs(\Gamma_{d\text{-AND}}, \zz)$ for $\Gamma_{d\text{-AND}}$ being the~language consisting of~functions $\text{AND}_k$ for all $k \le d$.
Since $\deg(\Gamma_{d\text{-AND}}) = d$, we~can apply Corollary~\ref{cor:linear} to obtain an equivalent formula~$\Phi_2 \in \cs(\Gamma, \nn)$ on $\Oh(n)$ variables, such that $|\Phi_2| = |\Phi_1| \cdot \Oh(1) + \Oh(n) = \Oh(n^d)$ and $||\Phi_2|| = ||\Phi_1|| \cdot n^{\Oh(1)}$, so each weight can be stored in $\Oh(\log n)$ bits.
\end{proof}

The self-reduction in Theorem~\ref{thm:csp-ub} is almost, but not quite, a~kernelization: the~formal decision problem we~reduce to is not the~same as the~original one, due to the~increase in weight values. Theorem~\ref{thm:csp-ub} formalizes Theorem~\ref{thm:upperbound:informal}.

Our lower bounds are based on reducing \textsc{Max $d$-CNF-SAT} to \cspnn{} for $\deg(\Gamma) = d$.
For $d \ge 3$ it is known that even \bmp{the} non-maximization variant \textsc{$d$-CNF-SAT} does not admit \bmp{a} compression of~size $\Oh\left(n^{d-\eps}\right)$~\cite{DellM14}.
However, the~\textsc{$2$-CNF-SAT} problem is solvable in polynomial time and only its maximization version becomes NP-hard.
We first note that \textsc{Max $2$-CNF-SAT} also cannot have any non-trivial compression.

\begin{lemma}\label{lem:2sat}
\textsc{Max CSP}$(\Gamma_{2\text{-SAT}}, \nn, 3)$ does not admit a~compression of~size  $\Oh\left(n^{2-\eps}\right)$ for any $\eps > 0$, unless $\text{NP} \subseteq \text{co-NP} / \text{poly}$. 
\end{lemma}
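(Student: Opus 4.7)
The plan is to reduce from the decision version of $3$-NAE-SAT, for which Jansen and Pieterse~\cite{JansenP19} established that no polynomial-time compression of size $\Oh(n^{2-\eps})$ exists unless $\text{NP} \subseteq \text{co-NP}/\text{poly}$. This is a natural source: the characteristic polynomial $P_{\NAE_3}(x,y,z) = x + y + z - xy - xz - yz$ has degree~$2$, matching $\deg(\Gamma_{2\text{-SAT}}) = 2$, and $\Gamma_{2\text{-SAT}}$ satisfies the hypotheses of Corollary~\ref{cor:linear}: non-trivial; not $0$-valid (witnessed by $x_1 \lor x_2$ at $(0,0)$); not $1$-valid (witnessed by $\neg x_1 \lor \neg x_2$ at $(1,1)$); and not $2$-monotone (since $x_1 \lor x_2$ cannot be written as $A \lor B$ with $A$ a conjunction of positive literals and $B$ a conjunction of negative literals).

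Rather than invoking Corollary~\ref{cor:linear} as a black box, whose weight blow-up from Lemma~\ref{lem:unsigned-lit} could push the reduced instance past the bound $c=3$, I would unfold the reduction and perform a direct polynomial decomposition. One verifies by evaluating on the four symmetry classes of Boolean inputs that
\[
P_{\NAE_3}(x, y, z) = -(x + y + z) + \OR_2(x, y) + \OR_2(x, z) + \OR_2(y, z),
\]
and the identity $-v = -1 + \OR_2(\neg v, \neg v)$ converts each negative unary term into a $-1$ additive offset together with a unit-weighted unary $\Gamma_{2\text{-SAT}}$-constraint. Each unit-weighted $\NAE_3$ literal-application thus decomposes into three unary and three binary $\Gamma_{2\text{-SAT}}$-constraints of unit weight plus an additive offset of $-3$; the same decomposition carries through to every literal-substituted $\NAE_3(\pm x, \pm y, \pm z)$, since replacing~$v$ by~$1 - v$ in the identity preserves its degree-$2$ structure and keeps the offset at $-3$.

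Given a $3$-NAE-SAT formula~$\Phi$ with $n$ variables and $m = \Oh(n^3)$ literal-applications, the reduction outputs a Max $2$-SAT formula~$\Phi'$ on the same $n$ variables with $\Oh(n^3)$ unit-weighted constraints from~$\Gamma_{2\text{-SAT}}$, so $\|\Phi'\| = \Oh(n^3)$ and the instance lies in $\cspnnc{2\text{-SAT}}$ with $c = 3$. By construction $\Phi'(x) = \Phi(x) + 3m$ for every assignment~$x$, so the decision threshold~$m$ for $3$-NAE-SAT translates to the threshold~$4m$ for~$\Phi'$ and preserves both biconditionals (conditions~(4)--(5) of Definition~\ref{def:transform}). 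If $\cspnnc{2\text{-SAT}}$ with $c = 3$ admitted a compression of size $\Oh(n^{2-\eps})$, composing it with this reduction would produce, for each $3$-NAE-SAT instance on $n$ variables, a bit-string of length $\Oh(n^{2-\eps})$ equivalent to the original through the compression's target problem, contradicting the Jansen--Pieterse lower bound. The main obstacle is maintaining the output weight bound at $\|\Phi'\| = \Oh(n^3)$, which the manual polynomial decomposition sidesteps by avoiding the polynomial weight blow-up of Lemma~\ref{lem:unsigned-lit}.
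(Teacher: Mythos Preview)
Your proof is correct, but it takes a genuinely different route from the paper's. The paper reduces directly from \textsc{Vertex Cover} parameterized by the number of vertices~$n$: introduce a variable~$x_v$ per vertex, a unit-weight clause $\OR_2(\neg x_v,\neg x_v)$ per vertex, and a weight-$2n$ clause $\OR_2(x_u,x_v)$ per edge; then an assignment of value at least $2n|E| + k$ exists iff there is a vertex cover of size at most $n-k$. This is a short ad-hoc gadget, drawing on the Dell--van Melkebeek lower bound for \textsc{Vertex Cover}.

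Your argument instead uses $3$-NAE-SAT as the source and unfolds the polynomial machinery of Section~\ref{sec:polynomials} explicitly for this one case, via the identity $P_{\NAE_3}(x,y,z) = -(x+y+z) + \OR_2(x,y)+\OR_2(x,z)+\OR_2(y,z)$ together with $-v = -1 + \OR_2(\neg v,\neg v)$. This is in closer spirit to the paper's general reduction framework and makes the degree-$2$ connection between $\NAE_3$ and $\Gamma_{2\text{-SAT}}$ concrete, at the cost of a longer computation. The paper's Vertex Cover reduction is shorter and self-contained; your approach better illustrates why the characteristic-polynomial viewpoint is the ``right'' explanation even in the base case $d=2$. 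Both land at the same total weight $\Oh(n^3)$, and both are equally loose about the constant in $\|\Phi'\| \le n^3$ (the paper also writes ``total weight $\Oh(n^3)$'' to justify $c=3$), which is harmless after padding with a constant number of dummy variables.
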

\begin{proof}
We reduce from the~\textsc{Vertex Cover} problem parameterized by the~number of~vertices~$n$. Here, we~are given an undirected graph $G = (V,E)$ with integer $k$, and we~are asked whether one can choose $k$ vertices so that each edge is incident to at least one of~them.
\textsc{Vertex Cover} is known not to admit an $\Oh\left(n^{2-\eps}\right)$-size compression, unless $\text{NP} \subseteq \text{co-NP} / \text{poly}$~\cite{DellM14}.

Given graph $G$, we~construct a~\textsc{2-SAT} formula~as follows.
We create a~variable $x_v$ for each vertex $v \in V$ and add a~constraint \bmp{application} 
$\OR_2(\neg x_v, \neg x_v)$ with a~unitary weight.
We also create a~constraint \bmp{application} 
$\OR_2(x_u, x_v)$ with weight $2n$ for each edge $uv \in E$, so the~total weight of~the~instance is $\Oh(n^3)$.
Now, if $S \subseteq V$ is a~vertex cover in $G$, then the~assignment setting $x_v = 1$ for $v \in S$ has value $2n\cdot |E| + n - |S|$.
On the~other hand, if some assignment has value at least $2n\cdot |E|$, then it must satisfy all the~edge-constraints and $\{v \in V : x_v = 1\}$ forms a~vertex cover.
Therefore there is an~assignment of~value at least $2n\cdot |E| + k$ iff.~$G$ has a~vertex cover of~size at most $n - k$. 

As the~parameter is preserved by the~reduction, a~compression of~size~$\Oh(n^{2-\eps})$ for \textsc{Max CSP}$(\Gamma_{2\text{-SAT}}, \nn, 3)$ implies the~same for \textsc{Vertex Cover}, which proves the~lemma.
\end{proof}

\begin{theorem}\label{thm:csp-nn-lb}
Let non-trivial $\Gamma$ be neither 0-valid, 1-valid, nor 2-monotone, and let $d = \deg (\Gamma)$.
Then there is a~constant $c$ such that \cspnnc{} does not admit a~compression of~size  $\Oh\left(n^{d-\eps}\right)$ for any $\eps > 0$, unless $\text{NP} \subseteq \text{co-NP} / \text{poly}$.
\end{theorem}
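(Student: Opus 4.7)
The strategy is to reduce a known hard-to-compress problem into \cspnnc{} through Corollary~\ref{cor:linear} and Proposition~\ref{pro:compression}: for $d \ge 3$ the source will be \textsc{$d$-CNF-SAT}~\cite{DellM14}, and for $d = 2$ the source will be \textsc{Max CSP}$(\Gamma_{2\text{-SAT}}, \nn, 3)$ via Lemma~\ref{lem:2sat}. Before splitting cases I would note that the hypothesis on~$\Gamma$ forces $d \ge 2$: a multilinear polynomial of degree at most one that is $\{0,1\}$-valued on $\{0,1\}^k$ must be constant, a single projection $x_i$, or a negated projection $1 - x_i$ (match coefficients by plugging in $0^k$ and then each unit vector). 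Every such non-trivial function is 2-monotone, so if all constraints of $\Gamma$ had degree $\le 1$ then $\Gamma$ itself would be 2-monotone, contradicting the assumption. Hence $d \ge 2$ and in particular the language $\Gamma_{d\text{-SAT}} = \{\OR_d\}^{LIT}$ is relevant.

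For $d \ge 3$, observe that $P_{\OR_d}(x_1,\dots,x_d) = 1 - \prod_{i=1}^{d}(1-x_i)$ contains the monomial $(-1)^{d+1} x_1 \cdots x_d$, so $\deg(\Gamma_{d\text{-SAT}}) = d = \deg(\Gamma)$. Since $\Gamma_{d\text{-SAT}}$ and $\Gamma$ are both non-trivial and $\Gamma$ is neither 0-valid, 1-valid, nor 2-monotone, Corollary~\ref{cor:linear} delivers $\cs(\Gamma_{d\text{-SAT}}, \zz) \le_{LIN} \cs(\Gamma, \nn)$. A \textsc{$d$-CNF-SAT} instance is just a unit-weight formula in $\cs(\Gamma_{d\text{-SAT}}, \nn)$ with $||\Phi|| \le \Oh(n^d)$, hence fits \textsc{Max CSP}$(\Gamma_{d\text{-SAT}}, \nn, d)$. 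Applying Proposition~\ref{pro:compression}, an $\Oh(n^{d-\eps})$ compression for \cspnnc{} (for some $c = d + \Oh(1)$) would transfer back to one for \textsc{$d$-CNF-SAT}, contradicting~\cite{DellM14} under \ncontainment.

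For $d = 2$ the argument runs exactly in parallel, except the starting hard problem is Lemma~\ref{lem:2sat}: from $\cs(\Gamma_{2\text{-SAT}}, \zz) \le_{LIN} \cs(\Gamma, \nn)$ (Corollary~\ref{cor:linear}) and the $\Oh(n^{2-\eps})$ lower bound for \textsc{Max CSP}$(\Gamma_{2\text{-SAT}}, \nn, 3)$, Proposition~\ref{pro:compression} yields the desired lower bound for \cspnnc{} at some $c = 3 + \Oh(1)$. No genuine obstacle remains; the only bookkeeping to watch is that each step in the chain enlarges the weight exponent only by an additive constant depending on $\Gamma$, so a single constant~$c$ as in the theorem statement suffices. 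All the conceptual content has been compressed into Corollary~\ref{cor:linear} and the observation that the characteristic polynomial of $\OR_d$ has full degree $d$.
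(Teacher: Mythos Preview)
Your proposal is correct and follows essentially the same route as the paper: rule out $d=1$ by observing that degree-$\le 1$ Boolean constraints are 2-monotone, then invoke Corollary~\ref{cor:linear} together with Proposition~\ref{pro:compression} to pull back a hypothetical $\Oh(n^{d-\eps})$ compression for \cspnnc{} to \textsc{Max CSP}$(\Gamma_{d\text{-SAT}},\nn,c_d)$, contradicting Lemma~\ref{lem:2sat} for $d=2$ and~\cite{DellM14} for $d\ge 3$. The only additions you make over the paper's write-up are the explicit verification that $\deg(\OR_d)=d$ and the coefficient-matching justification of the $d\ge 2$ step, both of which are harmless elaborations.
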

\begin{proof}
\mic{First observe that a~characteristic polynomial of~a~constraint with $d=1$ is linear.
Since this polynomial is 0/1-valued, it can depend only on one variable, which means that the~constraint is 2-monotone.
Hence we~can assume that $d \ge 2$.}

We know that there is a~constant $c_d$ so that \textsc{Max CSP}$(\Gamma_{d\text{-SAT}}, \nn, c_d)$ does not admit compression of~size $\Oh\left(n^{d-\eps}\right)$ unless $\text{NP} \subseteq \text{co-NP} / \text{poly}$: for $d=2$ it is due to Lemma~\ref{lem:2sat} and for $d \ge 3$ it follows from the~compressibility hardness for the~non-maximization variant of~$d$-CNF-SAT~\cite{DellM14}.
As noted in Proposition~\ref{pro:compression}, if we~had such a~compression for \cspnnc{} with sufficiently large $c$, then by Corollary~\ref{cor:linear} it would transfer to \textsc{Max CSP}$(\Gamma_{d\text{-SAT}}, \nn, c_d)$.
\end{proof}

Theorem~\ref{thm:csp-nn-lb} is a~formalization of~Theorem~\ref{thm:lowerbound:informal} in the~introduction.
Below we~present some important corollaries from it.

\paragraph{Applications to specific CSPs} Having established both the~lower and upper bound, we~can refer to $\deg(\Gamma)$ as the~\emph{optimal compression exponent} for \cspnn{}.
We are now equipped with a~handy but powerful tool for determining the~optimal compressibility of~\cspnn{} as this task reduces to computing the~degrees of~characteristic polynomials in $\Gamma$.

As an example, we~apply this technique to compute the~{optimal compression exponent} for \textsc{Max $k$-NAE-SAT}, \textsc{Max E$k$-Lin}, and \textsc{Max $k$-Exact-SAT}, which are all NP-hard for $k \ge 2$.
We have
\begin{itemize}
\itemsep0em 
    \item \textsc{Max $k$-NAE-SAT} = \textsc{Max CSP$(\{\NAE_k\}^{LIT}, \nn)$},
    \item \textsc{Max E$k$-Lin} = \textsc{Max CSP$(\{\XOR\}^{NEG}, \nn)$},
    \item \textsc{Max $k$-Exact-SAT} = \textsc{Max CSP$(\{\EX_k\}^{LIT}, \nn)$}, where $\EX_k(x_1,\dots,x_k) = 1$ iff.~there is exactly one 1 in $(x_1,\dots,x_k)$.
\end{itemize}
Since $\deg(\glit) = \deg(\gneg) = \deg(\Gamma)$ it suffices to analyze the~characteristic polynomials for functions $\NAE_k$, $\XOR_k$, and $\EX_k$.
Let $e_i(x_1,\dots,x_k)$ denote $i$-th elementary symmetric polynomial, i.e., the~sum of~all degree-$i$ monomials on $k$ variables.

$\begin{array}{rll}
&&\\
\NAE_k(x_1,\dots,x_k) =& \sum_{i=1}^{k-1} (-1)^{i-1} e_i(x_1,\dots,x_k) & \text{for odd } k,\\ 
\NAE_k(x_1,\dots,x_k) =& \sum_{i=1}^{k} (-1)^{i-1} e_i(x_1,\dots,x_k) & \text{for even } k, \\
\XOR_k(x_1,\dots,x_k) =& \sum_{i=1}^{k} (-2)^{i-1} e_i(x_1,\dots,x_k) & \text{for all } k, \\
\EX_k(x_1,\dots,x_k) =& \sum_{i=1}^{k} i\cdot (-1)^{i-1} e_i(x_1,\dots,x_k) & \text{for all } k.\\&&
\end{array}$

\mic{\noindent It is easy to check these formulas \bmp{using the~binomial theorem. We} present the~argument for $\XOR_k$ as~an~example.
Suppose the~number of~1s in the~vector $(x_1,\dots,x_k)$ is $\ell$.
Then $e_i(x_1,\dots,x_k)$ equals $\binom \ell i$ for $i \le \ell$ and $0$ for $i > \ell$.
The formula~for $\XOR_k(x_1,\dots,x_k)$ becomes $\sum_{i=1}^{\ell} (-2)^{i-1} \binom \ell i = 
(-\frac{1}{2})\cdot \big(\sum_{i=0}^{\ell} (-2)^{i} \binom \ell i - 1 \big) = (-\frac{1}{2})\cdot((-1)^\ell - 1)$ which is 1 for odd $\ell$ and 0 for even $\ell$, as expected.}

By these identities we~deduce that the~{optimal compression exponent} for \textsc{Max $k$-NAE-SAT} is $k$ in the~even case, $k-1$ in the~odd case, and the~{optimal compression exponent} for both \textsc{Max E$k$-Lin} and \textsc{Max $k$-Exact-SAT} is $k$.

\mic{An example of~a~non-symmetric constraint with a~\bmp{non-trivial} upper bound on its degree is $f_k$, with $\ar(f_k) = 3^k$, defined recursively: $f_0(x) = x$ and $f_k(x_1, \dots x_{3^k}) = \NAE_3\big(f_{k-1}(x_1, \dots x_{3^{k-1}}),$ $f_{k-1}(x_{3^{k-1}+1},$ $\dots x_{2\cdot 3^{k-1}}), f_{k-1}(x_{2\cdot 3^{k-1}+1}, \dots x_{\bmp{3^{k}}})\big)$.
Since $\deg(\NAE_3) = 2$, we~have $\deg(f_k) = 2^k = \ar(f_k)^{\log_3(2)}$~\cite{NisanS94}.}

It is tempting to seek a~concise characterization of~$\Gamma$ for which one can obtain non-trivial bound on $\deg(\Gamma)$ and therefore a~non-trivial compression for \cspnn{}, where
by non-trivial we~mean a~bound of~the~form $\deg(f) \le \ar(f) - 1$ for functions depending on all the~coordinates.
Unfortunately, as far as we~are aware no such conditions are known, \bmp{not even} for symmetric polynomials induced by symmetric constraints.
There exist some interesting partial results though, e.g., that if the~number of~variables~$k$ is a~prime minus one then the~degree is always~$k$, and in general $\deg(f) \ge k - \Oh(k^{0.548})$~\cite{GathenR97}.
On the~other hand, there are infinitely many symmetric functions for which $\deg(f) = \ar(f) - 3$~\cite{GathenR97}.
When it comes to non-symmetric functions, there exist infinitely many examples with $\deg(f) \le \log(\ar(f))$~\cite{Simon83}, which is asymptotically the~lowest upper bound possible~\cite{NisanS94}.

\paragraph{Negative weights}
For the~sake of~completeness, we~show that an analogous classification holds for \cspzz{}, that is, whenever \cspzz{} is NP-hard, then the~upper bound from Theorem~\ref{thm:csp-ub} is essentially tight.
The dichotomy theorem for $\mathbb{W} = \zz$ can be stated in a~simpler manner, as the~problem becomes NP-hard whenever $\deg(\Gamma) \ge 2$~\cite{JonssonK07} and the~case $\deg(\Gamma) = 1$ reduces to linear function maximization.
This dichotomy follows also from the~reduction below. 

\begin{theorem}\label{thm:csp-zz-lb}
Let non-trivial $\Gamma$ be such that $d = \deg (\Gamma) \ge 2$.
Then there is a~constant $c$ such that \cspzzc{} does not admit a~compression of~size  $\Oh\left(n^{d-\eps}\right)$ for any $\eps > 0$, unless $\text{NP} \subseteq \text{co-NP} / \text{poly}$.
\end{theorem}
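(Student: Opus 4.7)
The plan is to mirror the strategy of Theorem~\ref{thm:csp-nn-lb}, but to route the reduction through the~negation-closure~$\gneg$ so as to accommodate the~possibility that~$\Gamma$ itself is~$0$-valid or~$1$-valid, which was explicitly forbidden there. Concretely, I would chain
\[
\cs(\Gamma_{d\text{-SAT}}, \zz) \le_{ADD} \cs((\gneg)^{T,F}, \zz) \le_{ADD} \cs(\gneg, \zz) \le_{ADD} \cs(\Gamma, \zz),
\]
where the~first step is Lemma~\ref{lem:apply-poly} applied with target language~$\gneg$ (using $\deg(\gneg)=\deg(\Gamma)=d=\deg(\Gamma_{d\text{-SAT}})$ as recorded after Lemma~\ref{lem:express-g}), the~middle step is Lemma~\ref{lem:implement-tf} applied to~$\gneg$, and the~last step is Lemma~\ref{lem:signed-reductions}(1). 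Composing with the~trivial inclusion $\cs(\Gamma_{d\text{-SAT}}, \nn)\subseteq\cs(\Gamma_{d\text{-SAT}}, \zz)$ and invoking the~contrapositive of~Proposition~\ref{pro:compression}, a~compression of~\textsc{Max CSP}$(\Gamma, \zz, c)$ of~size $\Oh(n^{d-\eps})$ would yield one for \textsc{Max CSP}$(\Gamma_{d\text{-SAT}}, \nn, c_d)$ once~$c$ is chosen to be a~sufficiently large constant depending on~$\Gamma$, contradicting the~same lower bound used in Theorem~\ref{thm:csp-nn-lb} (Lemma~\ref{lem:2sat} for $d=2$ and~\cite{DellM14} for $d\ge 3$).

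The main obstacle is verifying that Lemma~\ref{lem:implement-tf} is applicable to~$\gneg$, i.e., that~$\gneg$ is non-trivial and neither~$0$-valid nor~$1$-valid. Non-triviality is inherited from~$\Gamma$. For~$0$-validity, pick any non-trivial $f\in\Gamma$: if $f(\vec 0)=0$ then~$f$ itself already witnesses that~$\gneg$ is not~$0$-valid, and if $f(\vec 0)=1$ then the~non-trivial constraint $\neg f\in\gneg$ satisfies $(\neg f)(\vec 0)=0$ and serves as the~witness. The analogous argument with~$\vec 1$ in place of~$\vec 0$ shows that~$\gneg$ is not~$1$-valid. This uniform dichotomy, rather than the 0-valid/1-valid/2-monotone exclusions of~Theorem~\ref{thm:csp-nn-lb}, is the~only genuinely new ingredient; the~rest is bookkeeping that composes by the~transitivity of~$\le_{ADD}$ and preserves the~polynomial weight bound, so that the~resulting parameter~$c$ is a~constant depending only on~$\Gamma$.
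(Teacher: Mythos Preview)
Your proof is correct and follows essentially the same route as the paper's. The paper packages your first two steps as a single invocation of Corollary~\ref{cor:additive} with~$\Gamma_2=\gneg$, whereas you unpack that corollary into Lemma~\ref{lem:apply-poly} followed by Lemma~\ref{lem:implement-tf}; the verification that~$\gneg$ is neither~$0$-valid nor~$1$-valid, the use of Lemma~\ref{lem:signed-reductions}(1), and the final appeal to the known lower bounds are identical in both arguments (you are slightly more explicit about the~$d=2$ case via Lemma~\ref{lem:2sat}).
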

\begin{proof}
Since $f$ and $\neg f$ cannot be 0-valid (or 1-valid) at the~same time, the~language $\gneg$ is neither 0-valid nor 1-valid.
We can thus apply Corollary~\ref{cor:additive} to get $\cs(\Gamma_{d\text{-SAT}}, \zz) \le_{ADD} \cs(\gneg, \zz)$.
We compose it with $\cs(\gneg, \zz) \le_{ADD} \cs(\Gamma, \zz)$ (Lemma~\ref{lem:signed-reductions}, point (1)) to obtain that an $\Oh\left(n^{d-\eps}\right)$-size compression for \cspzzc{} with sufficiently large $c$ entails the~same for \textsc{Max CSP$(\Gamma_{d\text{-SAT}}, \zz, c - \Oh(1))$}, which implies $\text{NP} \subseteq \text{co-NP} / \text{poly}$~\cite{DellM14}.
\end{proof}

\section{Consequences for exponential-time algorithms}
\label{sec:exponential}

As mentioned before, our framework of~reductions can be used to preserve the~exponential running time as well.
Namely, if $CS(\Gamma_1, \mathbb{W}_1) \le_{ADD} CS(\Gamma_2, \mathbb{W}_2)$, then an algorithm for \textsc{Max CSP}$(\Gamma_2, \mathbb{W}_2, c)$ with running time $T(n)$ entails an algorithm for \textsc{Max CSP}$(\Gamma_2, \mathbb{W}_2, c - \Oh(1))$ with running time $T(n + \Oh(1))$.
All the~constructed transformation, except from $\cs(\glit, \nn) \le_{LIN} \cs(\Gamma, \nn)$ (Lemma~\ref{lem:implement-lit}), are additive and in particular they work as long as negative weights are allowed.
Alternatively, we~can take advantage of~other properties of~particular constraint languages to remove the~negative weights.

\begin{lemma}\label{lem:exp-cycle}
Let $d = \deg(\Gamma) \ge 2$. Then $\cs(\Gamma_{d\text{-SAT}}, \nn) \le_{ADD} \cs(\Gamma, \zz) \le_{ADD} \cs(\Gamma_{d\text{-SAT}}, \nn)$, that is, these constraint systems are equivalent with respect to relation $\le_{ADD}$.
\end{lemma}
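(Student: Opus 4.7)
The plan is to chain together the additive reductions established earlier in the section, handling the two $\le_{ADD}$ inclusions separately. In both directions I rely on the invariant $\deg(\Gamma) = \deg(\gneg) = d = \deg(\Gamma_{d\text{-SAT}})$ and on the fact that all languages involved are non-trivial: $\Gamma$ because $d \ge 2$ forces it to contain a~function of~degree at least two, and $\Gamma_{d\text{-SAT}}$ because it contains $\OR_d$.

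For $\cs(\Gamma, \zz) \le_{ADD} \cs(\Gamma_{d\text{-SAT}}, \nn)$, I~would first apply Lemma~\ref{lem:apply-poly} with source $\Gamma$ and target $\Gamma_{d\text{-SAT}}$ to reach $\cs((\Gamma_{d\text{-SAT}})^{T,F}, \zz)$. Since $\Gamma_{d\text{-SAT}}$ is neither 0-valid (it contains $\OR_d$) nor 1-valid (it contains the all-negative-literal clause $\neg x_1 \lor \dots \lor \neg x_d$), Lemma~\ref{lem:implement-tf} then applies and delivers $\cs((\Gamma_{d\text{-SAT}})^{T,F}, \zz) \le_{ADD} \cs(\Gamma_{d\text{-SAT}}, \zz)$. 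Because $(\Gamma_{d\text{-SAT}})^{LIT} = \Gamma_{d\text{-SAT}}$, Lemma~\ref{lem:unsigned-lit} removes the~negative weights, giving $\cs(\Gamma_{d\text{-SAT}}, \zz) \le_{ADD} \cs(\Gamma_{d\text{-SAT}}, \nn)$, and transitivity of~$\le_{ADD}$ closes the~chain.

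The~converse direction $\cs(\Gamma_{d\text{-SAT}}, \nn) \le_{ADD} \cs(\Gamma, \zz)$ starts from the~trivial inclusion $\cs(\Gamma_{d\text{-SAT}}, \nn) \le_{ADD} \cs(\Gamma_{d\text{-SAT}}, \zz)$, where a~formula~with non-negative weights is reinterpreted as one with integer weights. The~main obstacle I anticipate is that $\Gamma$ itself may be 0-valid or 1-valid (for instance $\Gamma = \{\AND_d\}$), so a~direct analogue of~the~previous paragraph breaks at the~Lemma~\ref{lem:implement-tf} step. The~fix is to route through $\gneg$: apply Lemma~\ref{lem:apply-poly} with target $\gneg$, which is permissible because $\deg(\gneg) = d$, to obtain $\cs(\Gamma_{d\text{-SAT}}, \zz) \le_{ADD} \cs((\gneg)^{T,F}, \zz)$. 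Crucially, $\gneg$ is always neither 0-valid nor 1-valid as~a~language, since for any non-trivial $f \in \Gamma$, exactly one of~$f(0,\dots,0)$ and $\neg f(0,\dots,0)$ equals $0$, yielding a~non-trivial, non-0-valid function in $\gneg$, and the~symmetric argument handles 1-validity. Lemma~\ref{lem:implement-tf} therefore gives $\cs((\gneg)^{T,F}, \zz) \le_{ADD} \cs(\gneg, \zz)$, and Lemma~\ref{lem:signed-reductions}(1) delivers the~final step $\cs(\gneg, \zz) \le_{ADD} \cs(\Gamma, \zz)$. Composing everything by transitivity concludes the~proof.
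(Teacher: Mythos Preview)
Your proof is correct and follows essentially the same route as the paper. The only cosmetic difference is packaging: the paper invokes Corollary~\ref{cor:additive} (which is precisely the composition of Lemma~\ref{lem:apply-poly} and Lemma~\ref{lem:implement-tf} that you spell out) and presents the reductions as a single cycle
\[
\cs(\Gamma_{d\text{-SAT}}, \zz) \le_{ADD} \cs(\Gamma_{d\text{-SAT}}, \nn) \le_{ADD} \cs(\gneg, \zz) \le_{ADD} \cs(\Gamma, \zz) \le_{ADD} \cs(\Gamma_{d\text{-SAT}}, \zz),
\]
whereas you treat the two directions separately and unpack the corollary into its constituent lemmas.
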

\begin{proof}
Recall that $\gneg$ can be neither 0-valid nor 1-valid so it satisfies the~conditions for $\Gamma_2$ in Corollary~\ref{cor:additive}.
The same holds for $\Gamma_{d\text{-SAT}}$.
We also take advantage of~the~fact that this language can express negated literals, i.e., $\left (\Gamma_{d\text{-SAT}} \right)^{LIT} = \Gamma_{d\text{-SAT}}$.
We have the~following cycle of~reductions.
\begin{align*}
\cs(\Gamma_{d\text{-SAT}}, \zz) \quad &\le_{ADD} \quad \cs(\Gamma_{d\text{-SAT}}, \nn) \quad & \text{Lemma~\ref{lem:unsigned-lit} for $(\Gamma_{d\text{-SAT}})^{LIT} = \Gamma_{d\text{-SAT}}$}\\
&\le_{ADD} \quad \cs(\gneg, \zz) \quad & \text{Corollary~\ref{cor:additive}} \\
&\le_{ADD} \quad \cs(\Gamma, \zz) \quad & \text{Lemma~\ref{lem:signed-reductions}, point (1)} \\
&\le_{ADD} \quad \cs(\Gamma_{d\text{-SAT}}, \zz) \quad & \text{Corollary~\ref{cor:additive}} & & \qedhere
\end{align*}
\end{proof}

\begin{theorem}
For each $d \ge 2$ and any constant $\alpha~> 1$ either all the~following problems admit an~$\alpha^nn^{\Oh(1)}$ algorithm for all $c$, or none of~them do:
\begin{enumerate}
    \item \cspnnpar{\Gamma_{d\text{-SAT}}},
    \item \cspzzc{} for any $\Gamma$ with $\deg(\Gamma) = d$,
    \item \cspnnc{} for any $\Gamma$ with $\deg(\Gamma) = d$ such that $\gneg = \Gamma$ or $\glit = \Gamma$.
\end{enumerate}
\end{theorem}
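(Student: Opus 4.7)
The plan is to derive the equivalence from Lemma~\ref{lem:exp-cycle} together with Proposition~\ref{pro:exp}, adding one short argument to handle the non-negativity constraint in item~(3). Recall that an additive transformation raises the variable count by $O(1)$ and multiplies the weight bound by $n^{O(1)}$, so by Proposition~\ref{pro:exp} an algorithm of running time $\alpha^n n^{O(1)}$ on the target becomes an algorithm of running time $\alpha^{n+O(1)}(n+O(1))^{O(1)} = \alpha^n\,n^{O(1)}$ on the source, and the range of~$c$ is shifted by only an additive constant. Hence a~family of~algorithms ``for all~$c$'' transfers across any $\le_{ADD}$ reduction.

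First I~would argue (1)$\Leftrightarrow$(2). Fix any~$\Gamma$ with~$\deg(\Gamma)=d\ge 2$. Lemma~\ref{lem:exp-cycle} provides the~two-way additive chain
\[
\cs(\Gamma_{d\text{-SAT}}, \nn)\ \le_{ADD}\ \cs(\Gamma, \zz)\ \le_{ADD}\ \cs(\Gamma_{d\text{-SAT}}, \nn).
\]
Applying Proposition~\ref{pro:exp} in each direction, the~existence of~an $\alpha^n n^{O(1)}$ algorithm for one side (for all~$c$) is equivalent to the~same for the~other side (for all~$c$). Since this is true for every~$\Gamma$ with~$\deg(\Gamma)=d$, the~single problem in~(1) is equivalent to every problem listed in~(2); in particular, all problems in~(2) are equivalent to each other.

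Next I~would tie in~(3). For any~$\Gamma$ occurring in~(3), the~non-negative variant $\cspnnc{}$ is a~subclass of~the~signed variant $\cspzzc{}$, so an $\alpha^n n^{O(1)}$ algorithm for the~latter solves the~former; this handles (2)$\Rightarrow$(3). For the~converse, I~remove negative weights in two sub-cases:
\begin{itemize}
\itemsep0em
\item If~$\Gamma = \gneg$, then Lemma~\ref{lem:signed-reductions}, point~(2), gives directly $\cs(\Gamma, \zz) = \cs(\gneg, \zz) \le_{ADD} \cs(\gneg, \nn) = \cs(\Gamma, \nn)$.
\item If~$\Gamma = \glit$, then Lemma~\ref{lem:unsigned-lit} yields $\cs(\Gamma, \zz) \le_{ADD} \cs(\glit, \nn) = \cs(\Gamma, \nn)$.
\end{itemize}
In~either case Proposition~\ref{pro:exp} transports an $\alpha^n n^{O(1)}$ algorithm from~(3) back to the~corresponding signed problem in~(2), completing (3)$\Rightarrow$(2).

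I~do not anticipate a~real obstacle: the~algebraic and combinatorial content is fully packaged inside Lemma~\ref{lem:exp-cycle}, Lemma~\ref{lem:signed-reductions}, and Lemma~\ref{lem:unsigned-lit}. The only care that has to be taken is bookkeeping—verifying that the~$O(1)$ additive shifts in~the~variable count and in~the~parameter~$c$ are absorbed by the~``$n^{O(1)}$'' factor and by the~``for all~$c$'' quantifier. Because~$\alpha>1$ is a~constant, $\alpha^{O(1)}$ is constant and merges with the~polynomial term, which is precisely the~reason additive (rather than merely linear) transformations were introduced.
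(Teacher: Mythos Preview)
Your proposal is correct and follows essentially the same approach as the paper's proof: both use Lemma~\ref{lem:exp-cycle} with Proposition~\ref{pro:exp} for the equivalence of~(1) and~(2), observe that~(3) is a special case of~(2) for one direction, and invoke Lemma~\ref{lem:signed-reductions}~(2) and Lemma~\ref{lem:unsigned-lit} for the two sub-cases in the converse direction. Your additional remarks on absorbing the $O(1)$ shifts into the polynomial factor and the ``for all~$c$'' quantifier are accurate bookkeeping that the paper leaves implicit.
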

\begin{proof}
As noted in Proposition~\ref{pro:exp}, if $\cs(\Gamma_1, \mathbb{W}_1) \le_{ADD} \cs(\Gamma_2, \mathbb{W}_2)$ and \textsc{Max CSP}$(\Gamma_2, \mathbb{W}_2, c)$ admits an algorithm with running time $\alpha^nn^{\Oh(1)}$, then the~same holds for \textsc{Max CSP}$(\Gamma_1, \mathbb{W}_1, c - \Oh(1))$.
The equivalency between (1) and (2) has been proven in Lemma~\ref{lem:exp-cycle}.
Since (2) is more general than (3), it suffices to reduce (2) into (3).
Lemma~\ref{lem:signed-reductions}, point (2), provides the~reduction $\cs(\Gamma, \zz) \le_{ADD} \cs(\gneg, \nn)$ for the~case $\gneg = \Gamma$ and Lemma~\ref{lem:unsigned-lit} provides the~reduction $\cs(\Gamma, \zz) \le_{ADD} \cs(\glit, \nn)$ for the~case $\glit = \Gamma$.
\end{proof}

First corollary of~this theorem is that problems of~form \cspzz{} are divided into equivalence classes with respect to the~optimal running time.
In particular, solving any \cspzz{} with $\deg(\Gamma) \ge 3$ in time $\Oh(2^{(1-\eps)n})$  for any $\eps > 0$ contradicts the~\textsc{Max 3-SAT Hypothesis}.
Also, the~hypothesis remains equivalent if we~replace \textsc{Max 3-CNF-SAT} with \textsc{Max 3-Lin SAT} or \textsc{Max 3-Exact SAT} with only positive weights, because their constraint languages satisfy $\gneg = \Gamma$ and $\glit = \Gamma$, respectively.
Another corollary is that
improving the~running time $\Oh\big(2^{\frac{\omega~n}{3}}\big)$ for \textsc{Max Cut} or \textsc{Max DiCut} with integer weights or \textsc{Max 3-NAE-SAT} with positive weights would imply an analogous breakthrough for \textsc{Max 2-CNF-SAT}.

\mic{Alman and Williams \cite{AlmanW20} have noted that it is not known how to improve the~running time for \textsc{Max 3-CNF-SAT}, even for instances with a~linear number of~clauses. They have therefore formulated a~stronger hypothesis.
The \textsc{Sparse Max 3-SAT Hypothesis} states that there exists $c > 0$ such that \textsc{Max 3-CNF-SAT} with $cn$ clauses does not admit an~$\Oh(2^{(1-\eps)n})$-time algorithm for any $\eps > 0$.
Similarly, their \textsc{Sparse Max 2-SAT Hypothesis} states that one cannot beat running time $\Oh\big(2^{\frac{\omega~n}{3}}\big)$ for \textsc{Max 2-CNF-SAT} with $cn$ clauses.
Observe that our reductions preserve the~property of~having $\Oh(n)$ different constraint applications (condition (2) in Definition~\ref{def:transform}).
Therefore as long as we~allow negative weights at constraint applications, we~can replace \textsc{Max 2-CNF-SAT} (resp.~\textsc{Max 3-CNF-SAT}) in this hypothesis with \cspzz{} for any constraint language~$\Gamma$ of~degree 2 (resp.~3) to obtain an equivalent statement.}


\section{Conclusions and open problems}
We have provided a~complete characterization of~the~optimal compression for \textsc{Max CSP}($\Gamma$) in the~case of~a~Boolean domain.
a~natural question arises about larger domains.
Our approach does not transfer even to the~case with a~domain of~size 3, since there is no unique way to represent functions $\{0,1,2\}^k \rightarrow \{0,1\}$ as polynomials.
One may consider, e.g., embedding to a~Boolean domain or using non-multilinear polynomials, but it is not clear which approach leads to the~optimal degree and how to find accompanying lower bounds.

On the~exponential-time front, we~have showed that \textsc{Max $d$-CNF-SAT} is as hard as any \textsc{Max CSP} of~degree $d$ as long as negative weights are allowed.
\mic{Although we~were able to get rid of~the~latter assumption in several cases, there is still a~gap in this classification: does improving the~running time for any degree-$d$ \cspnn{} imply an improvement for \textsc{Max $d$-CNF-SAT}?}

\bibliographystyle{abbrvurl}
\bibliography{max-csp}

\end{document}